\documentclass[11pt]{article}
\pdfoutput=1 
\linespread{1.06}
\usepackage[sc]{mathpazo}

\usepackage[margin=1in]{geometry}
\usepackage[english]{babel}
\usepackage[utf8]{inputenc}
\usepackage[compact]{titlesec}
\usepackage{float}

\usepackage{cmap}
\usepackage{bm}
\pagestyle{plain}

\usepackage{amsfonts}
\usepackage{amsmath}
\usepackage{amssymb}
\usepackage{amsthm}
\usepackage{float}
\usepackage{graphics}

\usepackage{hyperref}
\usepackage[svgnames]{xcolor}
\hypersetup{colorlinks={true},urlcolor={blue},linkcolor={DarkBlue},citecolor=[named]{DarkGreen},linktoc=all}
\usepackage{natbib}

\usepackage{microtype}
\usepackage[capitalise,nameinlink,noabbrev]{cleveref}

\usepackage{doi}

\usepackage{tikz}

\newcommand*\circled[1]{\tikz[baseline=(char.base)]{
            \node[shape=circle,draw,inner sep=2pt] (char) {#1};}}
\usepackage{booktabs} 
\usepackage{amsfonts,amsbsy,amsthm}
\usepackage{thmtools,thm-restate}
\usepackage{enumerate}
\usepackage{bbm}
\usepackage{nicefrac}
\usepackage{wrapfig}
\usepackage[justification = centering]{subcaption}
\usepackage{mathtools}
\usepackage{array,multirow}

  {\list{}{\leftmargin=#1\rightmargin=#1}\item[]}%
  {\endlist}

\newtheorem{theorem}{Theorem}[section]
\newtheorem{lemma}[theorem]{Lemma}

\newtheorem{proposition}[theorem]{Proposition}
\newtheorem{corollary}[theorem]{Corollary}

\newtheorem{example}[theorem]{Example}


\usepackage{tikz}
\usetikzlibrary{calc, arrows,matrix,decorations.pathreplacing}
\usepackage{paralist}
\usepackage{algorithm}
\usepackage{algorithmic}
\usepackage{bbding}
\usepackage{pifont}
\usepackage{fontawesome}

\usepackage[margin=1in]{geometry}
\usepackage{tabularx}


\newcommand\tikzmark[2]{%
\tikz[remember picture,baseline] \node[inner sep=2pt,outer sep=0] (#1){#2};%
}

\newcommand\link[2]{%
\begin{tikzpicture}[remember picture, overlay, >=stealth, shift={(0,0)}]
  \draw[red, ->, bend right=80] (#1) to (#2);
\end{tikzpicture}%
}

\usepackage[capitalise,noabbrev]{cleveref}
\Crefname{claim}{Claim}{Claims}
\Crefname{corollary}{Corollary}{Corollaries}
\Crefname{definition}{Definition}{Definitions}
\Crefname{example}{Example}{Examples}
\Crefname{lemma}{Lemma}{Lemmas}
\Crefname{property}{Property}{Properties}
\Crefname{proposition}{Proposition}{Propositions}
\Crefname{remark}{Remark}{Remarks}
\Crefname{theorem}{Theorem}{Theorems}
\title{Equitable Allocation for Mixtures of Goods and Chores}
\author{
	\begin{tabular}{m{0.12\textwidth}m{0.12\textwidth}m{0.12\textwidth}m{0.12\textwidth}m{0.12\textwidth}m{0.12\textwidth}}
		\multicolumn{3}{c}{\textbf{Hadi Hosseini}} & \multicolumn{3}{c}{\textbf{Aditi Sethia}}\\
		\multicolumn{3}{c}{\small{Pennsylvania State University}} & \multicolumn{3}{c}{\small{Indian Institute of Technology, Gandhinagar}}\\
		\multicolumn{3}{c}{\href{mailto:test@test.com}{\small{\texttt{hadi@psu.edu}}}} & \multicolumn{3}{c}{\href{mailto:test@test.com}{\small{\texttt{aditi.sethia@iitgn.ac.in}}}}\\
	\end{tabular}
}

\date{}

\begin{document}

\maketitle

\begin{abstract}
Equitable allocation of indivisible items involves partitioning the items among agents such that everyone derives (almost) equal utility.
We consider the approximate notion of \textit{equitability up to one item} (EQ1) and focus on the settings containing mixtures of items (goods and chores), where an agent may derive positive, negative, or zero utility from an item.
We first show that---in stark contrast to the goods-only and chores-only settings---an EQ1 allocation may not exist even for additive $\{-1,1\}$ bivalued instances, and its corresponding decision problem is computationally intractable. 
We focus on a natural domain of normalized valuations where the value of the entire set of items is constant for all agents.
On the algorithmic side, we show that an EQ1 allocation can be computed efficiently for 
(i) symmetric tri-valued ($\{-1, 0, 1\}$) normalized valuations,
(ii) objective but non-normalized valuations,
(iii) two agents with type-normalized valuations.
Previously, EQX allocations were known to exist only for 2 agents and objective valuations, while the case of subjective valuations remained computationally intractable even with two agents. We make progress by presenting an efficient algorithm that outputs an EQX allocation for symmetric bi-valued ($\{-1,1\}$) normalized subjective valuations for any number of agents. We complement our study by providing a comprehensive picture of achieving EQ1 allocations in conjunction with economic efficiency notions such as Pareto optimality and social welfare.

\end{abstract}



\section{Introduction}
The distribution of indivisible items fairly among a set of agents is a prominent problem within the field of economics and computation \citep{M04fair,10.5555/3033138,brams1996fair,VARIAN197463}.
A host of real-world problems deal with the distribution of items that are considered `goods' (positively valued) by some agents while `chores' (negatively valued) by others: some students may enjoy taking a rigorous course on Artificial Intelligence (AI), while others may find it daunting. 
Paper reviewing is often rewarding when it aligns with the reviewer's expertise, but can become a tedious task (a chore) otherwise.
Given that agents may have different subjective valuations over the items, the goal is to find an allocation of the items to the agents that is fair while satisfying some notion of economic efficiency.

The integration of modern AI models (e.g. pre-trained Large Language Models) into large-scale decision-making pipelines has become increasingly prevalent, with applications spanning both socio-economic domains and logistical optimization problems.
Examples range from recommender systems in streaming platforms to matching mechanisms for course allocation, such as ScheduleScout,\footnote{ScheduleScout: \url{https://www.getschedulescout.com/}} a platform that has facilitated the allocation of over 400,000 course seats while incorporating satisfaction and fairness criteria.
As such systems increasingly mediate the allocation of scarce resources, ensuring equitable outcomes and embedding fairness considerations into their design becomes a critical objective.

The underlying preferences in the above examples can be either positive or negative. The recommender systems, for revenue optimization, may sometimes suggest options that do not align with a user's preferences. A student may have to be allocated a course that he considers a challenge in order to satisfy the capacity constraints.
Therefore, to ensure that no individual/group is systematically disadvantaged in opportunities and outcomes, training equitable and ethical models becomes imperative.


A desirable fairness notion is \textit{equitability}, which requires that the \textit{subjective} value of every agent over their bundle to be the same \citep{Dubins1961}. In other words, each agent is equally `happy' (or unhappy) with its own bundle. There have been other well-studied fairness notions like \emph{envy-freeness}, where every agent values its own bundle better than anyone else's \citep{GS58puzzle,F67resource}.
In a series of studies with human subjects, equitability has been shown to have a major impact on perceived fairness of allocations \citep{HERREINER2010238,Herreiner2009}, and plays a crucial role in practical applications such as divorce settlements \citep{brams1996fair} and rental harmony \citep{10.1145/3131361}. 


While an equitable allocation always exists for divisible items (aka \textit{cake-cutting}) \citep{alon1987splitting,brams2006better,CECHLAROVA2013239}, when items are indivisible (can not be fractionally assigned), equitable solutions may not exist: consider one good (or chore) and two agents who like (or dislike) it.
This impossibility in the presence of indivisible items has motivated `up to one' style approximations. 
In particular, equitability up to any item (EQX) and equitability up to one item (EQ1) are relaxations that allow for inequality between two agents to be resolved by removing any item (in the case of EQX) or some item (in the case of EQ1) from one agent's bundle (see \Cref{sec:prelims} for formal definitions).



In the goods/chores-only settings, existence and computation of EQX/EQ1 allocations have been well-explored; such an allocation always exists and can be computed efficiently \citep{10.5555/3006652.3006719,10.5555/3398761.3398810}. The setting of mixtures is relatively underexplored. The aim of this exposition is to address this gap.
%
%
In contrast to the universal existence of EQX allocation for goods/chores only settings, even the weaker concept of EQ1 allocation may not exist for mixtures of goods and chores, even with two agents and two items, as illustrated in the following example.

\begin{example}
\label{noEQ1}
Consider an instance with two agents (Alice and Bob) and two items such that Alice values the two items at $-1$ each while Bob values them at $1$ each. In any allocation, Alice's valuation is strictly less than that of Bob even after the hypothetical removal of some item from any of the bundles. Therefore, no allocation in this instance is EQ1.
\end{example}

The above existential issue can be primarily attributed to the pronounced `disparity' in how agents subjectively evaluate the entire set of items. In \Cref{noEQ1}, Alice's value for the entire set is $-2$, while Bob's value is $2$.
This observation motivates the study of \textit{normalized valuations} where the value of the entire set of items is a constant (positive, negative, or zero) for all agents.
Normalization is a standard and reasonable assumption in fair division literature, and has been studied in the context of cake-cutting\footnote{In fair cake-cutting, the common assumption is that each agent values the entire cake as $1$ (or $-1$ for burnt cakes)}\citep{10.5555/3033138}, price of fairness \citep{10.1007/s00224-011-9359-y,10.1007/978-3-031-43254-5_16} and online fair division as well \citep{article}.


\subsection{Our Contributions}
Our work provides a deep dive into the existence and computational boundaries of EQ1 allocations for mixtures of goods and chores with and without economic efficiency notions (e.g. Pareto optimality (PO) and utilitarian/egalitarian social welfare; see \Cref{sec:prelims} for formal definitions). We not only show non-existence under non-normalized valuations, but also establish the hardness of deciding the existence of an EQ1 allocation (\Cref{thm:EQ1hard}).
This result complements a recent study that shows computing an EQX allocation---a strengthening of EQ1---is strongly NP-hard \citep{barman2024nearly}.



\Cref{tab:result_stable} provides a summary of our algorithmic results. 

\paragraph{Objective valuations.} In Section \ref{sec:EQ1},
we show that an EQ1 allocation always exists and can be computed efficiently for non-normalized objective valuations, where each item is either a good or a chore for all the agents (\Cref{thm:obj}). Our analysis gives rise to \cref{lem:completion}, which enables the design of new algorithms along with efficiency. It states that if there exists a partial EQ1 allocation of subjective items, it can be completed by allocating the remaining objective items in an EQ1 manner. We note here that objective valuations have also been studied under the name of doubly monotone valuations \citep{bhaskar_et_al:LIPIcs.APPROX/RANDOM.2021.1} in the context of envy-freeness.

\paragraph{Normalized valuations.}
In \Cref{sec:EQ1norm}, we show that under symmetric bi-valued normalized valuations, where an item takes values from $\{-1,1\}$, an EQX allocation always exists and can be computed in polynomial time for any number of agents (\cref{thm:norm_EQ1}). Note that EQX is known to exist only for $2$ agents with objective valuations and is hard to compute for non-objective normalized valuations, even for two agents \citep{barman2024nearly}. Towards tractability, our algorithm gives an EQX allocation for any number of agents with symmetric and normalized valuations, not necessarily objective. Our techniques involve multiple careful transfers of items among the agents and maintain a partial EQ1/EQX allocation at every step.
We additionally show the existence and efficient computation of EQ1 allocations for symmetric tri-valued valuations ($\{-1, 0, 1\}$).
Such valuations generalize binary preferences, capture realistic scenarios involving approvals/dis-approvals/neutrality, and have been studied in the context of  EFX$+$PO \citep{10.1007/978-3-030-58285-2_1}. It is also relevant to note that binary preferences capture many real-world domains, and many significant contributions in allocation and voting problems have solely focused on such preferences
\citep{BOGOMOLNAIA2005165,articlebabaioff,HPP+20fair,10.1007/978-3-030-94676-0_21,DARMANN2015548}. We also note that all our results for $\{-1,0,1\}$ valuations are scale-free and hence, hold for $\{-w,0, w\}$ valuations as well, where $w \in \mathbb{R}$.


Furthermore, we show that when valuations are in addition \textit{type-normalized} (i.e. for all the agents, the sum of the value of all the goods (or chores) is a constant $g$ (or $c$)), an EQ1 allocation can be computed efficiently for two agents (\Cref{thm:type_n=2}).


\paragraph{Efficiency.}
We also consider EQ1 in conjunction with the efficiency notion of Pareto optimality (PO). An allocation is said to be Pareto optimal if no reallocation of items can make any agent better off without making some other agent worse off. In \Cref{sec:EQ1$+$PO}, we show that an EQ1 and PO allocation may not exist even with $\{-1, 1\}$ normalized valuations.
Additionally, we show that deciding whether an EQ1 and PO allocation exists is intractable even for the special class of type-normalized valuations (\Cref{thm:eq1pohard}). 
Nonetheless, we develop a polynomial-time algorithm for $\{-1,0,1\}$ normalized valuations that computes an EQ1$+$PO allocation, when one exists (\Cref{lem:typen101}). We note here that the presence of zeros has been a source of computational challenge for EQ1$+$PO allocations even in the only goods setting \citep{10.5555/3367032.3367073,DBLP:journals/jair/GargM24}. Our algorithmic technique involves reducing the instance to a partial instance such that every item is non-negatively valued by at least one agent, computing a Nash optimal solution, and invoking \Cref{lem:completion} to assign any remaining chore.

\paragraph{Welfare.}
In \Cref{sec:EQ1+welfare}, we develop a pseudo-polynomial time algorithm for finding EQX allocations that maximize utilitarian or egalitarian welfare (see \Cref{sec:prelims} for formal definitions.) To that end, we use a dynamic programming approach that keeps track of a set of states representing a set of all possible allocations (\Cref{thm:psuedo}).
Note that maximizing welfare along with approximate equitability is already known to be weakly NP-Hard, even when all the items are goods \citep{10.5555/3545946.3598685}.
In Appendix D, we provide additional results on the compatibility of EQ, EQ1, and PO, with envy-freeness and its relaxations.

\begin{table*}[t]
\small
\centering
\begin{tabular}{@{}llllll@{}}
\toprule
     &  & Arbitrary                                                                                            & Objective                                                                          & Normalized                                                                              & Type-normalized                                                                                \\ \midrule
EQ1 & \begin{tabular}[c]{@{}l@{}} Exist. \\ Comp. \end{tabular}  & \begin{tabular}[c]{@{}l@{}} \ding{55}, $\{-1,1\}$ (Ex \ref{noEQ1})\\ (w) NP-c (Thm \ref{thm:EQ1hard})\end{tabular} & \begin{tabular}[c]{@{}l@{}}\ding{51}\\ P (Prop \ref{thm:obj}) \end{tabular}             & \begin{tabular}[c]{@{}l@{}}\ding{51}, $\{-1, 0, 1\}$\\ P $\{-1, 0, 1\}$ (Thm \ref{thm:norm_EQ1})\end{tabular} & \begin{tabular}[c]{@{}l@{}}\ding{51}, n = 2\\ P (Thm \ref{thm:type_n=2})\end{tabular}               \\
\midrule
EQ1$+$PO & \begin{tabular}[c]{@{}l@{}} Exist. \\ Comp. \end{tabular} & \begin{tabular}[c]{@{}l@{}}\ding{55} $\{-1,1\}$ (Ex \ref{ex:pos})\\ (s) NP-h (Thm \ref{thm:eq1pohard})\end{tabular}              & \begin{tabular}[c]{@{}l@{}}\ding{55}\\ (s) NP-h$^{\star}$\end{tabular} & \begin{tabular}[c]{@{}l@{}}\ding{55}, $\{-1, 1\}$\\ P, $\{-1, 0, 1\}$ (Thm \ref{lem:typen101})\end{tabular} & \begin{tabular}[c]{@{}l@{}}\ding{51}, n = 2, $\{-1, 0, 1\}$ (Thm \ref{lem:type101})\\ P, $\{-1, 0, 1\}$ (Thm \ref{lem:typen101}) \end{tabular} \\ \bottomrule
\end{tabular}
    \vspace{0.3cm}

\caption{The summary of our results for arbitrary (non-normalized), objective (non-normalized), normalized, and type-normalized valuations. The result marked by $\star$ follow from \citep{10.5555/3367032.3367073} respectively. P, (w) NP-c, and (s) NP-h denote polynomial algorithm, weak NP-completeness, and strong NP-hardness. 
All EQ1 allocations (including welfare-maximizing) can be computed in pseudo-polynomial time (\cref{thm:psuedo}). Deciding the existence of EQ1$+$PO allocations is strongly NP-hard even for type-normalized valuations (\Cref{thm:eq1pohard}).
}
  \vspace{0.5cm}
\label{tab:result_stable}
\end{table*}

\subsection{Related Work}



\paragraph{Equitability and its Relaxations.}
Equitability as a fairness notion was first studied in the context of divisible items, where it is possible to assign a fraction of an item to any agent \citep{Dubins1961}. It is known that equitable allocations always exist for divisible items \citep{CECHLAROVA2013239, CHEZE201792}. There is a computational limitation to finding such an allocation \citep{10.1145/3033274.3085107}, but approximate equitable allocations admit efficient algorithms \citep{CECHLAROVA2012249, doi:10.1080/02331934.2011.563306}. In the setting of indivisible items, \cite{10.5555/3367032.3367073, 10.5555/3398761.3398810} studied equitability along with efficiency guarantees for goods/chores only settings. Follow-up work has studied equitability in various contexts: when items lie on a path, Follow-up work has studied equitability in various contexts: when items lie on a path, the trade-off between equitability and welfare (the price of equitability), and its interaction with welfare guarantees  \citep{DBLP:journals/jair/GargM24,SUKSOMPONG2019227,DBLP:journals/corr/abs-2101-09794,10.1007/s00224-011-9359-y, 10.1145/2781776, 10.5555/3545946.3598685, 10.1007/978-3-031-43254-5_16}. 

\paragraph{Equitability under non-monotone valuations.} Perhaps closest to our work is that of \cite{barman2024nearly}, who study EQX allocations for mixtures and exhibit an algorithm for two agents and objective additive valuations and hardness for non-objective valuations even with two agents.

Subsequent to our work, \cite{bilò2025} also noted that EQ1 allocations exist for objective valuations. However, that is the only point of overlap with our work. They studied weaker variants of EQX* and EQ1* in the context of non-monotone valuations. While EQX
requires the fairness property to hold regardless of which item is removed, EQX* allows the property to hold for the removal of either goods or chores. EQ1* entails removal of one good from the richer bundle \emph{and} one chore from the poor bundle. They showed that both EQX* and EQ1* allocations always exists and can be computed in polynomial time.

A similar notion to equitability, named jealousy-freeness (JF), and its up-to-one-style-relaxations JFX and JF1 have also been considered, wherein, in addition to requiring that the hypothetical removals are restricted to non-zero valued items, these definitions also allow hypothetical addition of an item to a bundle \citep{Alek2020}.
\cite{Alek2023} segregated the notions where addition/duplication is allowed and renamed these notions to 
DJFX and DJF1 (that is, JFX and JF1 up to removal or duplication). Their JFX definition is analogous to our EQX definition, albeit they restrict the removals to non-zero valued items. However, we note that their Theorem 3 does not hold true (See \Cref{sec:additionalrelatedwork} for details).



 

\paragraph{Other Fairness Notions.}
Fairness notions like envy-freeness (every agent values its own bundle more than it values anyone else's) \citep{berczi2020envy,hosseini2023fairlydividing,hosseinietalFairlyAllocating23} and maximin shares (a threshold based criterion where every agent gets the minimum value it can guarantee itself after dividing the set of items into number of agents many bundles and choosing the worst of them) \citep{10.1145/3465456.3467553,DBLP:conf/aaai/KulkarniMT21} have also been studied in the context of mixtures \citep{10.1613/jair.1.15800}. A brief overview of some of the known results in these settings is as follows.

For goods and even monotone valuations, envy-free upto one item (EF1)\footnote{See \Cref{sec:EF+EQ} for formal definition, which is analogous to EQ1.} allocations are always attainable and can be found efficiently via a graph-theoretic envy-cycle elimination algorithm of~\cite{LMM04}. In the chores setting, the standard envy-cycle elimination fails;~\cite{Bhaskar2020OnAE} overcame this by resolving a top-trading envy cycle, ensuring EF1. For mixed manna,~\cite{azizetalFairallocation22} established EF1 existence and efficient computation using a double round-robin method.

EF1 is also known to be compatible with PO \citep{10.1145/3355902} and also admits a pseudo-polynomial time algorithm \citep{10.1145/3219166.3219176} for goods. For chores, EF1+PO was earlier known to exist only for two agents \citep{azizetalFairallocation22}, three agents \citep{gargetalNewAlgorithms23}, and three agent-types \citep{gargetalWeightedEF124}; and other restricted classes of valuations~\citep{ebadianetalHowfairly22,gargetalFairEfficientAAAI22,wuetalWeightedEF123,GargConstantFactorSTOC25,azizetalFairallocation23}. The case of four or more agents is very recently settled by \cite{mahara2025}. For mixtures, the joint existence of EF1 and PO allocations remains unresolved even for three agents with additive valuations, except some restricted domains like lexicographic preferences \citep{hosseinietalFairlyAllocating23}, identical valuations, and ternary ($\{-\alpha, 0, \beta\}$) valuations \citep{10.1007/978-3-030-58285-2_1}. For two agents, EF1+PO can be achieved via a discrete adjusted winner rule \citep{azizetalFairallocation22}. Recently, \cite{barman2025} showed that envy-freeness up to $k$ reallocations (EFR-$k$) is compatible with PO for any number of agents.

The Maximin Share (MMS) value of an agent is the value that it can guarantee itself
if it partitions the set of items into bundles equal to the number of agents, given that it is the last agent to choose its bundle. An allocation that guarantees every agent its MMS value is said to be an MMS allocation. The problem is well-studied when there are only
goods \citep{10.1145/3391403.3399526,10.1145/3140756} or chores \citep{barman2020approximation,10.1145/3465456.3467555}. Notably, an MMS allocation may not even exist for goods with additive valuations \citep{10.1145/3140756,feige2021tight}, but a Polynomial Time Approximation Scheme (PTAS) to compute the MMS values
of agents is well-known \citep{WOEGINGER1997149}. In contrast, for the mixed manna, \cite{10.1145/3465456.3467553} showed that
finding even an approximate MMS value of an agent up to
any approximation factor in (0,1] is NP-hard. Complementing the hardness, \cite{DBLP:conf/aaai/KulkarniMT21} gave a 
PTAS to compute the MMS value, when its absolute value is at least $\frac{1}{p}$ times either the total value of all
the goods or total cost of all the chores, for some constant $p \geq 1$. Approximate MMS with PO is also studied, and the problem of computing an $\alpha$-MMS+PO allocation is known to be tractable under the following two conditions: (1) the number of agents is a constant, and (2)
for every agent, its absolute value for all the items is at least a constant factor of its total (absolute) value
for all the goods or all the chores \citep{10.1145/3465456.3467553}.



   
\section{Preliminaries}
\label{sec:prelims}

\paragraph{Setting.} A fair division instance ($[n], [m], \mathcal{V}$) consists of $n \in \mathbb{N}$ agents, $m \in \mathbb{N}$ items and valuations $V = \{v_1, v_2, \ldots v_n\}$ where $v_i : 2^{[m]} \rightarrow \mathbb{Z}$ captures the value that an agent $i$ derives from a subset of items. We restrict our attention to \textit{additive valuations} where for any subset $S \subseteq [m]$, we have $v_i(S) = \sum_{o \in S} v_i\{o\}$. 
Let $O$ be the set of all $m$ items. We say that an item is an \textit{objective good} if it is valued non-negatively by all the agents. That is, $o \in O$, such that $v_i(o) \geq 0$ for all $i \in [n]$, and denote the set of all objective goods by $O^+$. Similarly, all items $o \in O$ such that $v_i(o) \leq 0$ are called \textit{objective chores}, denoted by $O^-$. The remaining objects are called \textit{subjective items}, denoted by $O^\pm$. An instance where every item is either an objective good or an objective chore is said to be an instance with objective valuations, otherwise, it is said to be an instance with subjective valuations.

The valuations $\{-1,0, 1\}$ denote that each item is either valued at a constant $1$, $0$, or $-1$ by every agent and are said to be \emph{symmetric tri-valued valuations}. Likewise, $\{-1,1\}$ are called \emph{symmetric bi-valued valuations}. It is relevant to note that all our results for symmetric bi-valued/tri-valued valuations are scale-free and hence, hold for $\{-w, w\}$ and $\{-w, 0, w\}$ valuations as well, where $w \in \mathbb{R}$.

An allocation $A = (A_1, A_2, \ldots A_n)$ is a partition of $m$ items into $n$ bundles, one for each agent. An allocation is said to be complete if no item remains unallocated, otherwise it is a partial allocation. The focus of this work is to compute complete allocations.

\paragraph{Equitability.}
An allocation $A$ is said to be \textit{equitable} (EQ) if all the agents derive equal value from their respective bundles, that is, for every pair of agents $i$ and $j$, we have $v_i(A_i) = v_j(A_j)$. An allocation $A$ is said to be \textit{equitable up to one item} (EQ1) if for any pair of agents $i$ and $j$ such that $v_i(A_i) < v_j(A_j)$, either there exists some good $g$ in $A_j$ (a non-negatively valued item by $j$) such that $v_i(A_i) \geq v_j(A_j \setminus \{g\})$ or there exists some chore $c$ in $A_i$ (a non-positively valued item by $i$) such that $v_i(A_i \setminus \{c\}) \geq v_j(A_j)$. Further, an allocation $A$ is said to be \textit{equitable up to any item} (EQX) if for any pair of agents $i$ and $j$ such that $v_i(A_i) < v_j(A_j)$, we have $v_i(A_i) \geq v_j(A_j \setminus \{g\})$ for all goods $g$ in $A_j$ (non-negatively valued items by $j$), and $v_i(A_i \setminus \{c\}) \geq v_j(A_j)$ for all chores $c$ in $A_i$ (non-positively valued items by $i$).

For ease of exposition, we sometimes use `poor' and `rich' to identify an arbitrary agent. Formally, an agent $i$ is \textit{poor} under an allocation $A$ if $v_i(A_i) \leq v_j(A_j), \forall~j (\neq i) \in [n]$ and it is \textit{rich} if $v_i(A_i) > v_j(A_j), \forall~j (\neq i) \in [n]$.

\paragraph{Efficiency.} An allocation $A$ is said to be Pareto dominated by another allocation $A^\prime$ if $v_i(A^\prime_i) \geq v_i(A_i)$ for all agents $i$ and there exists at least one agent $j$ such that $v_j(A^\prime_j) > v_j(A_j)$. If an allocation is not Pareto dominated by any other allocation, then it is said to be Pareto optimal (PO). 
The  \textit{utilitarian welfare} of an allocation $A$ is the arithmetic mean of individual values under $A$, denoted by $UW(A) := \frac{1}{n} \sum_{i \in [n]} v_i(A_i)$. The \textit{egalitarian welfare} is the minimum value among all the individual values, denoted by $EW(A) := \min_{i \in [n]} v_i(A_i)$. The \textit{Nash welfare} is the geometric mean of the individual values, denoted by $NW(A) := (\prod_{i \in [n]} v_i(A_i))^\frac{1}{n}$
We denote by UW, EW and NW those allocations that maximize the utilitarian, egalitarian, and Nash welfare, respectively.

\paragraph{Normalization(s).} 
We consider two types of normalizations. 
An instance is \textit{normalized} if all agents value the entire bundle at a constant, that is, $v_i(O)$ is a constant for every agent $i \in \mathbb{N}$.
An instance is \textit{type-normalized} if for every agent $i$, all the goods sum up to a constant, say $g$, and all the chores sum up to a constant, say $c$, that is, $\sum_{o:v_i(o)\geq0} v_i(o) = g$ and $\sum_{o:v_i(o)\leq0} v_i(o) = c$. 
Note that a type-normalized instance is necessarily normalized (but the converse may not be true).

\section{EQ1 Allocations}
\label{sec:EQ1}

When the input instance has subjective items, an EQ1 allocation may not exist, even for two agents and two items, as illustrated by \Cref{noEQ1}. Moreover, we show in the following result that deciding if there is an EQ1 allocation is weakly NP-Hard. This is in contrast to the only goods or only chores setting where even for non-normalized instances, computing an EQ1 allocation admits simple and efficient algorithms \citep{10.5555/3006652.3006719,10.5555/3398761.3398810}.

\begin{restatable}{theorem}{thmEqOnehard}
\label{thm:EQ1hard}
For any mixed fair division instance, deciding the existence of an EQ1 allocation is (weakly) NP-complete. 
\end{restatable}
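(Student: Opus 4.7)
The plan is a two-part argument: membership in NP followed by NP-hardness via a polynomial-time reduction from \textsc{Partition} (weakly NP-complete, yielding the desired ``weak'' hardness).

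For \textbf{membership}, a complete allocation $A = (A_1,\ldots,A_n)$ is a polynomial-size certificate, and EQ1 is verifiable in $O(n^2 m)$ time: for each ordered pair of agents $(i,j)$ with $v_i(A_i) < v_j(A_j)$, check whether some non-negatively valued item of $A_j$, or some non-positively valued item of $A_i$, closes the gap.

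For \textbf{hardness}, given a \textsc{Partition} instance $a_1,\ldots,a_k$ with $\sum_i a_i = 2S$, I plan to construct a two-agent mixed instance such that an EQ1 allocation exists iff some $I^* \subseteq [k]$ satisfies $\sum_{i \in I^*} a_i = S$. The construction will use two kinds of items. First, $k$ \emph{partition items} $o_1,\ldots,o_k$ with subjective valuations (for example $v_1(o_i) = \Theta(a_i)$ as a good and $v_2(o_i)$ of the opposite sign or zero), producing a raw imbalance $v_1(A_1) - v_2(A_2)$ that is a controlled function of $\sum_I a_i - S$, where $I = \{i : o_i \in A_1\}$. Second, a constant number of subjective \emph{balancing items} whose values depend on $S$ (and possibly $\max_i a_i$) are added so that, for any placement of the balancers, the residual gap is still $\Theta\bigl(\bigl|\sum_I a_i - S\bigr|\bigr)$. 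The forward direction (\textsc{Partition} $\Rightarrow$ EQ1) follows by placing the $o_i$'s according to $I^*$ together with a canonical placement of the balancers, yielding an EQ (hence EQ1) allocation. The backward direction (EQ1 $\Rightarrow$ \textsc{Partition}) proceeds by case analysis over the placements of balancers, showing that an off-partition split ($\sum_I a_i \ne S$) always leaves a residual gap strictly larger than every single item's magnitude, violating EQ1.

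The \textbf{main obstacle} is the balancer gadget itself. Many natural single-balancer constructions admit a shortcut: a balancer large enough to cancel the $\Theta(S)$ raw gap is also large enough that, once removed, it closes any residual gap unconditionally, giving EQ1 even when no partition exists. To rule out such shortcuts, I plan to (i) scale all partition values by a large polynomial factor $M$ so that every off-partition deficit is at least $\Theta(M)$, and (ii) replace the single balancer by two or more complementary pieces whose \emph{individual} magnitudes are strictly smaller than this off-partition deficit, while their aggregate still cancels the $\Theta(MS)$ raw gap. Choosing values that simultaneously deliver the forward direction, eliminate every shortcut allocation, and remain polynomial in the input magnitudes (so as to keep the reduction weakly---not strongly---hard) is where most of the technical work will concentrate.
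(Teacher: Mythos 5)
Your NP-membership argument is fine and matches the paper's (implicit) one, and reducing from \textsc{Partition} is indeed the right source problem for weak hardness. But the hardness half of your proposal is a plan rather than a proof: the entire content of the reduction lives in the balancer gadget, and you explicitly leave its construction open. Worse, the obstacle you correctly identify is not overcome by the remedies you sketch. The EQ1 condition permits \emph{overshooting}: it asks for a good $g\in A_j$ with $v_i(A_i)\ge v_j(A_j\setminus\{g\})$, so a single sufficiently large good in the richer agent's bundle certifies EQ1 for that pair \emph{no matter how large the residual gap is}. With two agents, cancelling a raw imbalance of order $MS$ using a constant number of balancers forces at least one balancer of magnitude $\Omega(MS)$, which dwarfs the $\Theta(M)$ off-partition deficit; whenever its holder ends up richer, EQ1 holds for free. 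Scaling by $M$ does not help, since it scales the deficit and the item magnitudes proportionally, and splitting the balancer into constantly many complementary pieces still leaves one large piece. So the two-agent route, as described, has no evident way to block the shortcut, and you have not exhibited values that do.

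The paper avoids this entirely by using \emph{four} agents: two identical ``set'' agents who value the partition elements at $b_1,\dots,b_m$ and four dummy items at $-3T$, and two identical ``dummy'' agents who value the set items at $0$ and the dummies at $T$. A short case analysis shows any EQ1 allocation must give each dummy agent two dummy items (utility $2T$, with a removable good of value exactly $T$), which forces each set agent to utility at least $T$; since the set items total $2T$ and are shared between the two set agents, each gets exactly $T$, i.e., a partition. The duplication of agent types is what pins the utilities and eliminates the one-item slack — this is the idea your proposal is missing, and you should either adopt it or supply a concrete two-agent gadget together with a full case analysis showing no bad allocation survives.
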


\begin{proof}
We exhibit a reduction from $2$-partition, where given a multiset $U = \{b_1, b_2 \ldots b_m\}$ of integers with sum $2T$, the task is to decide if there is a partition into two subsets $S$ and $U \setminus S$ such that sum of the numbers in both the partitions equals $T$. 
We construct the reduced allocation instance as follows. We create $4$ agents, $m$ set-items $\{o_1,\ldots,o_m\}$, and $4$ dummy items $\{d_1, \ldots d_4\}$. The first two agents value the set-items at $\{b_1, \ldots b_m\}$ and all the dummy items at $-3T$. The last two agents value the set items at $0$ and the dummy items at $T$. This completes the construction (see \Cref{tab:EQ1ishard}).

\begin{table}[H]
\centering
\begin{tabular}{c|ccccccccc}
     & $o_1$ & $o_2$ & $\ldots$ & $o_m$  & $d_1$ & $d_2$ & $d_3$ & $d_4$\\ \hline 
   $a_1$  & $b_1$ & $b_2$ & $\ldots$ & $b_m$ & $-3T$ & $-3T$ & $-3T$ & $-3T$ \\
 $a_2$ & $b_1$ & $b_2$ & $\ldots$ & $b_m$ & $-3T$ & $-3T$ & $-3T$ & $-3T$ \\
   $a_3$  & $0$ & $0$  & $\ldots$ & $0$ & $T$ & $T $& $T$& $T$ \\
   $a_4$  & $0$ & $0$  & $\ldots$ & $0$ & $T$ & $T $& $T$& $T$\\
\end{tabular}
\vspace{0.4cm}
\caption{Reduced instance as in the proof of \Cref{thm:EQ1hard}.}
\vspace{0.4cm}
\label{tab:EQ1ishard}
\end{table}

We now argue the equivalence. 

\paragraph{Forward Direction.} Suppose that the $2$-partition instance is a yes-instance. Let $S$ and $U \setminus S$ be the said partitions. Then the allocation where agent $a_1$ gets $S$, $a_2$ gets $U \setminus S$, $a_3$ gets $\{d_1, d_2\}$ and $a_4$ gets $\{d_3, d_4\}$ is clearly an EQ1 allocation.

\paragraph{Reverse Direction.} Suppose there is an EQ1 allocation, say $A$, under the reduced instance. Then, note that $A$ assigns at most one dummy item to the agents $a_1$ and $a_2$, each. If not, say WLOG, $a_1$ receives $\{d_1, d_2\}$ under $A$, then the maximum utility $a_1$ can derive is $-4T$ (where it gets all the set items as well). But, $a_1$ violates EQ1 with respect to $a_3$ and $a_4$, whose minimum utility is $0$ each, and $a_1$ derives negative utility even if it removes the item $d_1$ from its bundle. Therefore, $a_1$ and $a_2$ can get at most one dummy item each. 
We now consider the following cases:
\begin{itemize}
\item $a_1$ and $a_2$ do not receive any dummy item. Note that since $A$ is EQ1, anyone from $a_3$ and $a_4$ cannot get three dummy items, else either one of $a_1$ or $a_2$ violates EQ1. Therefore, $a_3$ and $a_4$ both get two dummy items each. This forces $a_1$ and $a_2$ to receive a utility of at least $T$, thereby corresponding to a partition.

\item $a_1$ receives $d_1$ and $a_2$ receives $d_2$. Note that if the remaining two dummy items are allocated to any one agent, say $a_3$, then EQ1 is violated. Indeed, the maximum utility of $a_1$ and $a_2$  is negative, and even if they choose to ignore any item, they fall short of the utility derived by $a_3$ (which is $2T$). Therefore, $a_3$ receives $d_3$ and $a_4$ receives $d_4$ under $A$. This forces $a_1$ and $a_2$ to receive a utility of $T$ each from the set items, thereby forcing a partition.

\item $a_1$ receives $d_1$ and $a_2$ does not receive any dummy item. Then, WLOG, $a_3$ gets one dummy item and $a_4$ gets two dummy items. To be consistent with EQ1 against $a_4$, agent $a_1$ must get all the set items, thereby deriving a utility of $-3T+2T$. This leaves $a_2$ empty-handed and hence, it violates EQ1 with respect to $a_4$.
Therefore, since $A$ is EQ1, this case does not arise.
\end{itemize}
This concludes the argument.
\end{proof}

\subsection{Objective Valuations}

Given the negative existence and computational results, we identify tractable instances for which an EQ1 allocation always exist and can be computed efficiently. In particular, we develop an algorithm that computes an EQ1 allocation when all items are either objective goods or objective chores. 
This finding gives rise to an intriguing relationship between partial and complete EQ1 allocations, as we state in \Cref{lem:completion}.


\begin{restatable}{proposition}{propobj}
\label{thm:obj}
    For any mixed fair division instance with objective valuations, an EQ1 allocation always exists and can be computed in polynomial time. 
\end{restatable}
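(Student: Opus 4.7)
The plan is to construct an EQ1 allocation via a two-phase greedy procedure that starts from the empty allocation, first processes all objective chores, then all objective goods, maintaining EQ1 as an invariant at every insertion. Since the empty allocation is trivially EQ1, if I can show that each individual insertion preserves EQ1, then the final complete allocation will be EQ1 as well. The correctness proof amounts to a careful invariant-preservation argument on a single-step update.

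In the first phase, I iterate over $O^-$ in an arbitrary order and assign each chore $c$ to some currently richest agent $i \in \argmax_{j \in [n]} v_j(A_j)$. I would prove by case analysis that the updated allocation is still EQ1. The only bundle whose value changes is $A_i$, and since $v_i(c) \le 0$, its value can only decrease, so every pair $(p,q)$ with $p, q \neq i$ trivially retains its EQ1 witness. For a pair $(i, j)$, if $i$ remains at least as rich as $j$, the previous EQ1 witness still works: either a good in $A_i$ that was already a valid witness (removing it from $A_i \cup \{c\}$ yields an even smaller value, so the inequality only strengthens) or a chore in $A_j$ whose removal already dominated $v_i(A_i) \ge v_i(A_i \cup \{c\})$. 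If $i$ has now become poorer than $j$, then the freshly inserted chore $c$ itself is the EQ1 witness, because $v_i((A_i \cup \{c\}) \setminus \{c\}) = v_i(A_i) \ge v_j(A_j)$, using that $i$ was a richest agent before the insertion.

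In the second phase, I iterate over $O^+$ and assign each good $g$ to some currently poorest agent $i \in \argmin_{j \in [n]} v_j(A_j)$. The argument mirrors the first phase: pairs not involving $i$ are unaffected; for a pair $(i, j)$ where $i$ stays on the poorer side, the previous witness still certifies EQ1 because $v_i(g) \ge 0$ only increases $v_i(A_i)$ and thereby strengthens the inequality; and for a pair $(i, j)$ where $i$ has flipped to being the richer side, the good $g$ itself is the EQ1 witness, since $v_i((A_i \cup \{g\}) \setminus \{g\}) = v_i(A_i) \le v_j(A_j)$ by the choice of $i$ as a minimizer prior to the insertion.

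Each phase performs $O(m)$ insertions with $O(n)$ work per insertion to locate the extremal agent, giving an $O(nm)$ algorithm overall. The main obstacle I foresee is purely bookkeeping: the case analysis must cover every pair of agents at every insertion, cleanly separating (i) pairs unrelated to the augmented bundle, (ii) pairs in which the rich/poor role is preserved, and (iii) pairs in which the role flips and the newly inserted item must serve as the witness. Once this single-insertion invariant is written out, the full result follows by induction on the number of items processed. As a bonus, the same invariant-preservation argument, when initialized from an arbitrary partial EQ1 allocation of subjective items rather than from the empty allocation, directly establishes the completion statement of \Cref{lem:completion}.
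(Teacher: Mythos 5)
Your proof is correct and follows essentially the same greedy, invariant-preserving strategy as the paper's: the paper allocates goods first (each to a currently poorest agent, most-valuable-first) and then chores (each to a currently richest agent, most-disliked-first), whereas you reverse the two phases and use an arbitrary item order within each phase; neither difference matters for EQ1, since in both versions the just-inserted item in the extremal agent's bundle serves as the witness whenever the rich/poor relation flips. Your closing remark that the same single-insertion invariant, started from a partial EQ1 allocation of the subjective items, yields \Cref{lem:completion} also matches the paper.
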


\begin{proof} The algorithm iteratively picks a poor agent $p$---breaking ties arbitrarily---and allocates the most valuable good among the remaining items in $O^+$ according to $p$'s valuation. Once $O^+$ is exhausted, it picks a rich agent $r$, who then receives its most disliked chore from $O^-$. The correctness argument is deferred to the appendix.
\end{proof}

We note here that the existence of EQX allocation remains open even for objective valuations for more than two agents. \Cref{thm:obj} fails to guarantee the existence of EQX allocations even for identical objective valuations, as illustrated in the following example. 

\begin{example} Consider the following instance. The output of the algorithm in \Cref{thm:obj} is highlighted, which fails EQX. Indeed, if Alice chooses to ignore the item $o_2$ from Bob's bundle and hypothetically reduces its utility from $1$ to $-1$, it still remains a poor agent. But this allocation is EQ1, since Alice can ignore the chore $o_7$ from its bundle, thereby deriving equal utility as Bob, that is, $1$.

\begin{table}[H]
\footnotesize
    \centering
    \begin{tabular}{c|ccccccc}
         & $o_1$ & $o_2$ & $o_3$ & $o_4$ & $o_5$ &$o_6$ & $o_7$ \\ \hline 
        Alice & \circled{$2$}  & $2$  &  \circled{$2$}  & $2$ &  \circled{$-3$} & $-3$ &  \circled{$-3$} \\
        Bob & $2$  &  \circled{$2$} & $2$ &  \circled{$2$} & $-3$ &  \circled{$-3$} & $-3$ \\
    \end{tabular}
\end{table}
\label{ex:notEQX}
\end{example}


We now show that if there exists a partial EQ1 allocation that allocates $O^\pm$ entirely, then it can always be extended to a complete EQ1 allocation. This lemma will be useful in developing new algorithms in Section \ref{sec:EQ1norm}.

\begin{lemma}[\textbf{Completion Lemma}] 
\label{lem:completion}
Consider a partial EQ1 allocation $A$ that allocates a subset of items, say $S \subset [m]$. Then, $A$ can be completed while preserving EQ1 if $O^\pm \subseteq S$.

\end{lemma}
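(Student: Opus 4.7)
The plan is to adapt the two-phase allocation strategy used in \Cref{thm:obj}---first assign the remaining objective goods to poorest agents, then assign the remaining objective chores to richest agents---but starting from the partial EQ1 allocation $A$ rather than from scratch. This is well-defined because the hypothesis $O^\pm \subseteq S$ forces every still-unallocated item to lie in $O^+ \cup O^-$, which is precisely what the two-phase procedure handles. Concretely, I repeatedly select an unallocated good in $O^+$ and hand it to a currently poorest agent, until no good remains; then I repeatedly select an unallocated chore in $O^-$ and hand it to a currently richest agent, until no chore remains.

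The core of the proof is a one-step loop invariant: appending a single objective item keeps the partial allocation EQ1. For the good phase, suppose we append an objective good $g$ to the bundle $A_p$ of some poorest agent $p$, producing $A'_p = A_p \cup \{g\}$. Pairs not involving $p$ are unaffected. For a pair $(i,p)$ where $p$ remains on the poor side after the addition, the EQ1 witness valid for $A$ is still a witness for $A'$, because $v_p$-values only weakly increase under adding $g$ (and $g \notin A_i$). For a pair where $p$ has \emph{crossed} and become the richer agent, the just-added good $g$ itself is the EQ1 witness: removing $g$ from $A'_p$ returns the value to $v_p(A_p) \leq v_i(A_i)$. The chore phase is strictly symmetric, with a richest agent $r$, an objective chore $c$, and analogous witnesses---if $r$ stays rich, prior witnesses still work since $v_r(A'_r) \leq v_r(A_r)$; if $r$ has dropped below $i$, then removing $c$ from $A'_r$ gives $v_r(A_r) \geq v_i(A_i)$.

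The main (and only nontrivial) obstacle is the crossing case in each phase, where the recipient of an item swaps richness roles with some other agent. The resolving observation is that the just-added objective item is always a legitimate EQ1 witness by construction: it was assigned to the currently poorest (resp.\ richest) agent, so undoing its addition in the hypothetical removal returns that agent to a value dominated by (resp.\ dominating) every other value, which is exactly what EQ1 requires. Iterating this one-step extension over the at most $|[m]\setminus S|$ remaining items yields a complete EQ1 allocation that extends $A$, proving the lemma.
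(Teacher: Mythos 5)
Your proposal is correct and follows essentially the same route as the paper: the paper also completes $A$ by iteratively giving a poorest agent a good from $O^+$ until $O^+$ is exhausted and then giving a richest agent a chore from $O^-$, with correctness by the same one-step invariant (the just-added objective item serves as the EQ1 witness in the crossing case). The only cosmetic difference is that the paper hands out the \emph{most} liked good / \emph{most} disliked chore at each step, whereas you allocate an arbitrary remaining objective item; for EQ1 this ordering is immaterial, as your argument shows.
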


\begin{proof} Let $A$ be the partial EQ1 allocation that allocates $S$ such that $O^\pm \subseteq S \subset [m]$. 
Since the remaining items are all objective items, we iteratively allocate a poor agent under $A$ its most liked item from $O^+$. Once $O^+$ is exhausted, we iteratively allocate a rich agent its most disliked item from $O^-$. The correctness follows by a similar argument as in \Cref{thm:obj}.
\end{proof}

\section{Normalized Valuations}
\label{sec:EQ1norm}

In the further quest for tractability, we now turn our attention to subjective but normalized and type-normalized valuations.

\subsection{Symmetric Valuations}
\label{sec:symmetricvaluations}

\begin{restatable}{theorem}{thmEQoneww}
\label{thm:EQ1n}
    For any mixed fair division instance with symmetric bi-valued normalized valuations ($\{-1, 1\}$), an EQX (hence, EQ1) allocation always exists and can be computed in polynomial time.
\end{restatable}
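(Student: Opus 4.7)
My plan is to reduce the problem to maintaining a simple invariant along an iterative item-insertion scheme, and then to exploit normalization to resolve the one hard subcase.

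\textbf{Step 1: Simplify EQX.} In the symmetric bi-valued setting, since every item has value $\pm 1$, a direct case analysis shows that $A$ is EQX if and only if $\max_i v_i(A_i) - \min_i v_i(A_i) \leq 1$. Indeed, if the gap is $\geq 2$ between some pair $(i,j)$ with $v_i(A_i) < v_j(A_j)$, then either $v_j(A_j) \geq 1$ (so $A_j$ must contain an item valued $+1$ by $j$, whose removal witnesses an EQX violation) or $v_i(A_i) \leq -2$ (so $A_i$ must contain an item valued $-1$ by $i$, whose removal does the same). Moreover, normalization $v_i(O)=k$ implies the uniform counts $|G_i| = (m+k)/2$ and $|C_i| = (m-k)/2$ for every agent $i$, a structural property I exploit below.

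\textbf{Step 2: Incremental algorithm.} The algorithm maintains a partial allocation whose current values lie in $\{v, v+1\}$, thereby partitioning agents into a poor set $P$ (value $v$) and a rich set $R$ (value $v+1$). For each new item $x$, I attempt a direct placement that preserves the invariant: if $P = [n]$, assign $x$ arbitrarily; else if some $p \in P$ has $v_p(x) = +1$, assign $x$ to $p$; else if some $r \in R$ has $v_r(x) = -1$, assign $x$ to $r$. The only \emph{bad} case is when every poor agent views $x$ as a chore and every rich agent views it as a good, since any direct placement then opens the gap to $2$.

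\textbf{Step 3: Transfer routine.} In the bad case I perform a coordinated two-item swap: either (a) give $x$ to some rich $r$ and simultaneously move an item $y \in A_r$ with $v_r(y) = v_p(y) = +1$ for some $p \in P$ from $A_r$ to $p$; or (b) give $x$ to some poor $p$ and move an item $y \in A_p$ with $v_p(y) = v_r(y) = -1$ for some $r \in R$ from $A_p$ to $r$. A direct computation shows that each operation leaves every agent's value inside $\{v, v+1\}$, preserving the invariant. When no single swap is feasible, I would chain several such swaps along an exchange graph (in the spirit of envy-cycle elimination), and I would order the insertions so that subjective items are processed first, deferring the objective items to the Completion Lemma (\cref{lem:completion}).

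\textbf{Step 4: Main obstacle.} The main obstacle is the key lemma asserting that in the bad case a valid transfer---single swap or short chain---always exists. A purely local argument is insufficient, because one can construct intermediate configurations where neither (a) nor (b) is immediately available from the items currently held by $r$ or $p$. My plan is to invoke the uniform counts $|G_i| = (m+k)/2$ globally: summing indicator counts of $+1$s over $A_r$ for $r \in R$ against $P$, and of $-1$s over $A_p$ for $p \in P$ against $R$, and combining with a pigeonhole argument on the unallocated pool to rule out both failures simultaneously. Once this lemma is in place, polynomial running time follows routinely: at most $m$ insertions, each requiring $O(mn)$ work to find and execute the transfer, giving an $O(m^2 n)$-time algorithm that outputs an EQX (hence EQ1) allocation.
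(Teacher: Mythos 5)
Your setup matches the paper's: the gap-at-most-one characterization of EQX, the greedy insertion maintaining values in $\{v,v+1\}$, the identification of the bad case, the idea of resolving it by reallocating already-placed items, processing $O^\pm$ first and finishing with \cref{lem:completion}. But the crux of the theorem is exactly the step you defer to Step 4, and your plan for it is both incomplete and, as stated, based on a false claim. You assert that in the bad case a valid transfer (single swap or short chain) always exists and hope to prove this by pigeonhole on the uniform counts. The paper shows this is not what happens: there are configurations in which \emph{no} invariant-preserving transfer of any kind exists, and the resolution splits into two cases. When $|P|>|R|$, the fix is not a swap but a bulk operation pairing each rich agent with a distinct poor agent and moving a $+1$ item from the rich bundle to a poor agent who values it at $-1$; this strictly \emph{lowers} the floor (the paired poor agents drop to $v-1$) rather than preserving your $\{v,v+1\}$ window, and only works because the leftover unpaired poor agents become the new rich agents who can absorb the unallocated chore. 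Your invariant as formulated cannot accommodate this move. When $|P|\le|R|$, the paper derives a contradiction with normalization via a global accounting of ``contributing'' versus ``non-contributing'' good--chore pairs in each bundle, tracking how a poor agent could possibly match the rich agents' count of $+1$-valued items; this is substantially more delicate than summing indicators and applying pigeonhole, and it is precisely where the normalization hypothesis is consumed.

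Two smaller issues: your transfer repertoire omits the rich--rich and poor--poor transfers (moving an item valued $+1$ by $r$ and $-1$ by $r'$ between two rich agents, and dually for two poor agents), which the paper needs before it can even reach the ``no transfer is feasible'' case analysis; and the ``chain several swaps along an exchange graph'' idea is never made precise and is not what ultimately closes the argument. So the skeleton is right, but the theorem's actual content --- why the process never gets stuck --- is missing.
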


It is easy to see that for $\{-1,1\}$ valuations, EQ1 and EQX allocations coincide.
Since all the goods in a rich agent's bundle are valued at $1$ and all the chores in a poor agent's bundle are valued at $-1$, removal of any single item gets rid of the inequity. Therefore, it is sufficient to establish \Cref{thm:EQ1n} for EQ1 allocations.


Note that since the instance is normalized, the reduced instance restricted to items in only $O^\pm$ is also normalized. That is, every agent assigns a value of $1$ to exactly $k_1$ items from $O^\pm$ and a value of $-1$ to exactly $k_2$ items from $O^\pm$, where $k_1$ and $k_2$ are constants. We show that there is a partial EQ1 allocation that allocates all the items in $O^\pm$ and then extend it to a complete allocation using \Cref{lem:completion}.

\begin{algorithm}[!ht]
\caption{$n$ agents, $\{1,-1\}$ Normalized Valuations}
\label{alg:EQ1_n}
\textbf{Input:} An instance with $n$ agents, $m$ items and $\{-1, 1\}$ Normalized Valuations \\
\textbf{Output:} An EQ1 allocation $A$
\begin{algorithmic}[1]
\STATE $A \gets$ An empty allocation 
\STATE $P \gets$ Set of all poor agents under $A$
\STATE $R \gets$ Set of all rich agents under $A$
\WHILE{$O^\pm \neq \emptyset$}
\WHILE{Any $p$ values an unallocated item $o$ in $O^\pm$ at $1$} \label{state:one}
\STATE $A_p = A_p \cup \{o\}$
\ENDWHILE
\WHILE{Any $r$ values an unallocated item $o$ in $O^\pm$ at $-1$}
\STATE $A_r = A_r \cup \{o\}$
\ENDWHILE
\STATE \label{state:eleven} Execute any of the feasible transfers described in the points (\ref{item:(a)}-\ref{item:(d)}) in \Cref{sec:symmetricvaluations}
\STATE Repeat from Step \ref{state:one}
\IF{none of the transfers in Step \ref{state:eleven} is feasible}
\IF{$|P| > |R|$}
\FOR{$j \in [t]$}
\STATE $A_{r_j} = A_{r_j} \setminus \{o : v_{r_j}(o) = 1\}$
\STATE $A_{p_j} = A_{p_j} \cup \{o\}$
\ENDFOR
\STATE Repeat from Step \ref{state:one}
\ELSE
\RETURN$A$
\ENDIF
\ENDIF
\ENDWHILE
\STATE Use \textbf{Completion Algorithm} of Lemma \ref{lem:completion}
\RETURN$A$ 
\end{algorithmic}
\end{algorithm}

To that end, 
we first describe \Cref{alg:EQ1_n}. We iteratively pick a poor agent and allocate it an item it values at $1$ (a good) from $O^\pm$. If there are no such items in $O^\pm$, then we iteratively pick one of the rich agents and allocate it an item it values at $-1$ (a chore) from $O^\pm$. At this point, it is easy to see that the partial allocation is EQ1. We denote the set of agents with minimum utility as poor agents $(P)$ and those with maximum utility as rich agents ($R$). Now suppose that the remaining items in $O^\pm$ are all chores for the poor agents and goods for the rich agents, and if they are allocated to any of them, they increase the amount of inequity and hence may violate EQ1. Therefore, in order to move towards an EQ1 allocation that allocates all the items in $O^\pm$, we now aim to convert a poor (rich) agent to rich (poor) by re-allocating one of the allocated items, so that the converted agent can now be a potential owner of one of the remaining unallocated items while maintaining EQ1. We do so by executing one of the following transfers at a time, each of which allows us to allocate at least one unallocated item while respecting EQ1.

\begin{enumerate} 
    \item \label{item:(a)} Rich-Rich Transfer (R-R). If there exist agents $r, r' \in R ~\text{and an item}~ o \in A_r ~\text{such that}~ v_r(o)=1 ~\text{and}~v_{r'}(o) = -1$, transfer $o$ from $A_r$ to $A_{r'}$. 
    \vspace{0.2cm}
    \item Rich-Poor Transfer (R-P). If there exist agents $ r\in R, p \in P~\text{and an item}~ o \in A_r~\text{such that}~  v_r(o)=1=v_p(o)$, transfer $o$ from $A_r$ to $A_{p}$. 
    \vspace{0.2cm}
    \item Poor-Rich Transfer (P-R). If there exist agents $r\in R, p \in P~\text{and an item}~ o \in A_p~\text{such that}~  v_p(o)=-1=v_r(o)$, transfer $o$ from $A_p$ to $A_r$. 
    \vspace{0.2cm}
    \item \label{item:(d)} Poor-Poor Transfer (P-P). If there exist agents $p, p'\in P~\text{and an item}~ o \in A_p~\text{such that}~ v_p(o)=-1 ~\text{and}~ v_{p'}(o)=1$, transfer $o$ from $A_p$ to $A_{p'}$. 
    \end{enumerate}

We execute one of the transfers at a time, thereby converting at least one poor (rich) agent to rich (poor). Now, the remaining item $o$, which was earlier a chore for all the poor agents and a good for all the rich agents, has a potential owner, and the algorithm makes progress by allocating $o$ to that agent.
We are now ready to prove the theorem.

\begin{table}[H]
\footnotesize
\centering
\begin{tabular}{c|cc|ccc}
    & \multicolumn{2}{c|}{$A_r$} & \multicolumn{3}{c}{$A_p$} \\ \hline
$r$ & \tikzmark{z}{$1$}      & \tikzmark{a}{$1$}           &         &         & \tikzmark{w}{\color{gray!50}{$-1$}}     \\ \hline
$p$ &          & \tikzmark{b}{{\color{gray!50}{$1$}}}               & $1$     & $1$     & \tikzmark{x}{$-1$}    
\end{tabular}
\link{a}{b}
\link{x}{w}
\caption{Rich-poor and poor-rich transfers. In one step, only one of these transfers is executed, not both.}
\label{tab:richpoortransfer}
\end{table}

\begin{proof}[Proof Sketch for \Cref{thm:EQ1n}] We show that $\Cref{alg:EQ1_n}$
returns an EQ1 allocation.
Suppose there are no feasible transfers. If $|P|>|R|$, then we take a good ($1$-valued item) from $A_r$ and allocate it to $p$ who necessarily values it at $-1$ (otherwise, we would have implemented R-P transfer). We do this for $|R|$ disjoint pairs of a rich and a poor agent, thereby decreasing the utility of all these pairs by $-1$. But since $|P|>|R|$, we have a poor agent $p'$ whose value remains intact, which in turn, makes him one of the rich agents after the above transfers. Now, the remaining unallocated item, which was a chore for $p'$ can be allocated to it without violating EQ1. When $|P| \leq |R|$, then we argue that no feasible transfers imply a contradiction by using the assumption of normalization. We defer this case to the appendix.
\end{proof}

If the instances are allowed to have $0$-valued items, then the approach of assigning $O^\pm$ first and then completing the allocation fails even for type-normalized valuations, as illustrated by the following example. Note that in this case, $O\pm$ may not be normalized even if the entire instance is normalized.

\begin{example}
\label{ex:nopartialEQ1}
There is no (partial) allocation $A$ of $O^\pm = \{o_1, o_2, o_3\}$ that satisfies EQ1.

  \begin{table}[H]
    \footnotesize
    \centering
    \begin{tabular}{c|ccc|ccc|ccc}
         & $o_1$ & $o_2$ & $o_3$ & $o_4$ & $o_5$ &$o_6$ & $o_7$ & $o_8$ & $o_9$ \\ \hline  
        Alice & $1$  & $1$  & $1$  & $0$ & $0$ & $0$ & $-1$ & $-1$ & $-1$ \\
        Bob & $-1$  & $-1$ & $-1$ & $1$ & $1$ & $1$ & $0$ & $0$ & $0$\\
    \end{tabular}
\end{table}
\end{example}

Therefore, when we have $\{-1, 0, 1\}$ valuations, we need a slightly different approach of allocating items from the entire bundle rather than exhausting a subset of items first. Using this approach, we present an efficient algorithm for finding EQ1 allocations. It is relevant to note here that in the presence of $0$ valued items, EQ1 allocations do not coincide with EQX allocations, since removal of a good valued at $0$ from the richer bundle is not helpful towards getting rid of the inequity.




\begin{restatable}{theorem}{thmEQonewow}
\label{thm:norm_EQ1}
For any mixed fair division instance with symmetric tri-valued normalized valuations ($\{-1,0,1\}$), an EQ1 allocation always exists and can be computed in polynomial time. 
\label{thm:EQnzero}
\end{restatable}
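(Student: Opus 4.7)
The plan is to adapt the transfer-based algorithm of \cref{thm:EQ1n} to the tri-valued setting, but with the crucial modification that $0$-valued items are not reserved for a final clean-up phase. As \cref{ex:nopartialEQ1} makes clear, unlike the bi-valued case a partial EQ1 allocation restricted to $O^\pm$ may fail to exist, so the Completion Lemma (\cref{lem:completion}) cannot be applied as a standalone post-processing step on $O^+ \cup O^-$; instead, the allocation of $0$-valued items must be interleaved with the allocation of $\pm 1$-valued items throughout.

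Concretely, I would design an iterative algorithm that maintains as an invariant that the current partial allocation is EQ1, and at each step attempts the following in order: (i) give an unallocated item $o$ with $v_p(o)=1$ to a currently poorest agent $p$; (ii) give an unallocated item $o$ with $v_r(o)=-1$ to a currently richest agent $r$; (iii) dispatch any unallocated $0$-valued item to an agent for whom this does not change utility (always EQ1-preserving); and (iv) if none of the above directly makes progress, perform one of the R-R, R-P, P-R, P-P transfers of \cref{alg:EQ1_n} on $\pm 1$-valued items already assigned, so as to convert a rich (resp.\ poor) agent into a poor (resp.\ rich) one and thereby unblock a remaining $\pm 1$-valued item. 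Each of these moves changes a single agent's utility by at most $1$, so EQ1-preservation is routine to verify at each step.

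The main obstacle, exactly as in the proof of \cref{thm:EQ1n}, is arguing that the algorithm never gets irreparably stuck. In a fully stuck state, every remaining $\pm 1$-valued item must be a chore for every poor agent and a good for every rich agent, and no transfer is applicable. I would then branch on whether $|P|>|R|$ or $|P|\le|R|$, paralleling the end-of-loop analysis of \cref{thm:EQ1n}: in the former, a bulk re-allocation that moves one $+1$-valued item from each of the $|R|$ rich agents to a matched poor agent converts a previously poor agent into a new rich one and unblocks progress; in the latter, the normalization condition $v_i(O)=K$ is crucial. Summing the per-agent counts of $+1$'s and $-1$'s over allocated versus still-unallocated items and comparing against $v_i(O)=K$ for every $i$ forces either a transfer to be feasible or all remaining items to be $0$-valued across the relevant agents, in which case rule (iii) completes the allocation --- contradicting the fully-stuck hypothesis. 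The hardest sub-case will be when a few $0$-valued items block an otherwise straightforward transfer chain, since $0$-valued items do not contribute to the per-agent value counts and therefore have to be tracked separately in the double-counting argument. Polynomial runtime then follows, since every iteration either allocates an item or performs a bounded number of transfers that strictly reduce a natural potential such as the number of unallocated $\pm 1$-valued items.
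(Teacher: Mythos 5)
Your overall strategy mirrors the paper's: interleave the handling of $0$-valued items with the greedy poor-gets-good / rich-gets-chore rules (precisely because \cref{ex:nopartialEQ1} rules out exhausting $O^\pm$ first), then unblock stuck states with rich/poor conversion transfers, split on $|P|>|R|$ versus $|P|\le|R|$, and close the latter case via a normalization-based counting contradiction. However, there is one concrete omission in your step (iv): you propose to reuse the four transfers of \cref{alg:EQ1_n} verbatim on ``$\pm1$-valued items already assigned,'' but in the tri-valued setting the transfer set must be enlarged to include value pairs involving $0$. For instance, a Rich--Rich transfer of an item $o\in A_r$ with $(v_r(o),v_{r'}(o))=(1,0)$ drops $r$'s utility by one while leaving $r'$ untouched, turning $r$ into a poor agent who now values a blocked unallocated item at $1$; similarly $(0,-1)$ for R-R, $(1,0)$ and $(0,1)$ for R-P, $(-1,0)$ and $(0,-1)$ for P-R, and $(0,1)$ and $(-1,0)$ for P-P all make progress, and the paper lists exactly these enlarged transfers. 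This is not merely a matter of adding more moves for efficiency: without them, the ``no feasible transfer'' state does not force the valuation structure (every rich agent values every good in another rich bundle at $1$, every poor agent values every good in a rich bundle at $-1$, and so on) on which the counting contradiction in the $|P|\le|R|$ case rests, so the double-counting you sketch would be left with unresolved cases in which an allocated item is valued at $0$ by one of the two agents in a would-be transfer --- exactly the ``hardest sub-case'' you flag but do not resolve. Once the transfer set is extended, your argument coincides with the paper's, which itself defers the final counting to ``the analogous argument as in \cref{thm:EQ1n}.''
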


The proof idea is similar to that of $\Cref{thm:EQ1n}$. We continue with the greedy algorithm until every poor agent values the remaining items at $-1$ and every rich agent values them at $1$. We then execute the transfers described below. The claim is that either we arrive at an EQ1 allocation or we arrive at a contradiction by leveraging the assumption of normalization. We defer the complete proof to the appendix, but present the set of feasible transfers below.
After one of these transfers is implemented, at least one of the involved agents becomes a potential owner of an unallocated item without violating EQ1.

\begin{enumerate}
  \item \label{item:RRtransfer} Rich-Rich Transfer (R-R). If there exist agents $ r, r'\in R ~\text{and an item}~ o \in A_r~\text{such that}~(v_r(o), v_{r'}(o)) \in \{(1, 0), (0, -1), (1, -1)\}$, transfer $o$ from $A_r$ to $A_{r'}$. 

      \vspace{0.2cm}

  \item Rich-Poor Transfer (R-P). If there exist agents $ r\in R, p \in P~\text{and an item}~ o \in A_r~\text{such that}~ (v_r(o), v_{p}(o)) \in \{(1, 1), (1, 0), (0, 1)\}$, transfer $o$ from $A_r$ to $A_p$. 

      \vspace{0.2cm}

    \item Poor-Rich Transfer (P-R). If there exist agents $r\in R, p \in P~\text{and an item}~ o \in A_p~\text{such that}~ (v_p(o), v_r(o)) \in \{(-1, 0), (-1, -1), (0, -1)\}$, transfer $o$ from $A_p$ to $A_r$.

        \vspace{0.2cm}

    \item Poor-Poor Transfer (P-P). If there exist agents $ p, p'\in P~\text{and an item}~ o \in A_p~\text{such that}~ (v_p(o), v_{p'}(o)) \in \{(-1, 1),(0, 1), (-1, 0)\}$, transfer $o$ from $A_p$ to $A_{p'}$. 
\end{enumerate}

Note that the above result does not show the existence of EQX allocation, but does show the existence of a weaker notion of EQX$^+_-$ allocation \citep{10.24963/ijcai.2024/338}, that entails the removal of items with non-zero marginals, that is, either a strictly positive valued item or a strictly negative valued item. 






\subsection{Challenges Beyond Symmetric Tri-valued Valuations} We note here that in the presence of both subjective and objective items with general valuations, even with type-normalized instances, the intuitive idea of allocating a poor agent its most favorite item, if it exists, else allocating a rich agent its most disliked chore may not always work. One of the key challenges in such cases is that any poor agent may not derive positive utility from the remaining items and any rich agent may only derive positive utility from the remaining items. And unlike symmetric valuations, finding a feasible transfer may not be computationally efficient. 



In addition, a leximin++ allocation (defined below), which is known to satisfy not only EQ1 but stronger EQX property for certain restricted instances, fails to achieve EQ1 for mixtures, even with normalized valuations.
A leximin allocation is one that maximizes the minimum utility of an agent, subject to that maximizes the second-minimum utility, and so forth. Although there can be many leximin optimal allocations, they all induce a unique utility vector. Leximin++ allocation is an allocation that maximizes the minimum utility, and then maximizes the size of the bundle of the agent with the minimum utility, and so on. It is known that a leximin++ allocation is also EQX, for objective valuations with identical chores \citep{barman2024nearly}. When we go beyond objective valuations, Example \ref{ex:lemiminnotEQ1} shows that leximin++ may not be EQ1, even for a normalized instance with two agents and two items. Additionally,   \cite{10.1007/978-3-031-48974-7_12} showed that for $\{-1, 0, w\}$ (where $w$ is a positive integer) valuations, leximin allocations are EF1.

\begin{example} 
\label{ex:lemiminnotEQ1} Leximin++ is not EQ1 even for a normalized instance with two agents $(a_1, a_2)$ and two items $(o_1, o_2)$. $a_1$ values the items at $(10 , -15)$ while $a_2's$ valuation is $(-2, -3)$. There are at most $4$ possible allocations in this instance with utility vectors $(0, -5), (-5, 0), (10, -3), (-15, -2)$ for $(a_1, a_2)$ respectively. The only leximin++ allocation is the one with utility vector $(10, -3)$, but it is not EQ1.
\end{example}

Even if we restrict the valuations to $(1, -1)$, \Cref{ex:leximinPlusEQ1binary} suggests that leximin++ may not be EQ1. 

\begin{example} 
\label{ex:leximinPlusEQ1binary} Leximin++ is not EQ1 even with $\{1, -1\}$-normalized valuations. The instance in \Cref{tab:lexnoteq111} illustrates this. The highlighted allocation is leximin++ with the utility vector as $\{3, 1, 1, 1, 1\}$. This is not EQ1 since even if the last $5$ agents choose to hypothetically ignore a good from $a_1's$ bundle, they still fall short of the equitable utility.

\begin{table*}[ht]
\centering
\begin{tabular}{l|llllllllllll}
      & $o_1$ & $o_2$ & $o_3$ & $o_4$ & $o_5$ & $o_6$ & $o_7$ & $o_8$ & $o_9$ & $o_{10}$ & $o_{12}$ & $o_{13}$ \\ \hline
$a_1$ & $-1$    &\circled{$1$}     & \circled{$1$}    & \circled{$1$}    & \circled{$1$}    & \circled{$1$}     & \circled{-1}    & \circled{-1}    &$-1$   &$-1$      &$-1$      &$-1$      \\
$a_2$ & \circled{$1$}    &$-1$   &$-1$   &$-1$   &$-1$   &$-1$   &$-1$   &$-1$   & $1$     & $1$        & $1$        & $1$        \\
$a_3$ & $1$     &$-1$   & $-1$   &$-1$   &$-1$   &$-1$   &$-1$   &$-1$   & \circled{$1$}    &$1$       &$1$       &$1$       \\
$a_4$ &$1$    &$-1$   &$-1$   &$-1$   &$-1$   &$-1$   &$-1$   &$-1$   &$1$    & \circled{$1$}       &$1$       &$1$       \\
$a_5$ &$1$    &$-1$   &$-1$   &$-1$   &$-1$   &$-1$   &$-1$   &$-1$   &$1$    &$1$       & \circled{$1$}       &$1$       \\
$a_6$ &$1$    &$-1$   &$-1$   &$-1$   &$-1$   &$-1$   &$-1$   &$-1$   &$1$    &$1$       &$1$       & \circled{$1$}      
\end{tabular}
\vspace{0.4cm}
\caption{Leximin++ is not EQ1 even with \{1, -1\} normalized valuation. The highlighted allocation is leximin++ with the utility vector as $\{3, 1, 1, 1, 1\}$. This is not EQ1 since even if the last $5$ agents choose to hypothetically ignore a good from $a_1's$ bundle, they still fall short of the equitable utility.}
\vspace{0.4cm}
\label{tab:lexnoteq111}
\end{table*}

\end{example}


\subsection{Type-Normalized Valuations}

We now consider general valuations and present some tractable cases for type-normalized valuations.
\begin{theorem} 
\label{thm:type_n=2}
For any mixed fair division instance with type-normalized valuations and two agents, an EQ1 allocation always exists and can be computed in polynomial time.
\end{theorem}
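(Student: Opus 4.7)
The plan is to adapt the iterative greedy procedure of \Cref{thm:obj}, augmented with transfer moves in the spirit of \Cref{alg:EQ1_n}, but tailored to the two-agent, type-normalized setting. I would maintain a partial allocation $A=(A_1,A_2)$ that is EQ1 at every step. Let $p \in \{1,2\}$ denote the currently poor agent and $r$ the rich one (ties broken arbitrarily). At each iteration attempt the following in order: \textbf{(a)} if some unallocated item $o$ satisfies $v_p(o) \geq 0$, allocate the highest-valued (per $p$) such item to $p$; \textbf{(b)} else, if some unallocated item $o$ satisfies $v_r(o) \leq 0$, allocate the most-disliked (per $r$) such item to $r$; \textbf{(c)} otherwise we are in the \emph{stuck} case, where every remaining item is strictly positive for $r$ and strictly negative for $p$.

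Moves \textbf{(a)} and \textbf{(b)} preserve EQ1 by the usual argument: the rich agent's utility is unchanged and the poor agent's utility weakly increases (resp. the dual statement for chores), so at most the roles of $p$ and $r$ are swapped. The heart of the proof lies in handling case \textbf{(c)}. For this I would perform a transfer step: locate an item $o' \in A_r$ such that $v_r(o') \leq 0$ and $v_p(o') \geq 0$, reassign $o'$ from $A_r$ to $A_p$ (both utilities change in the ``balancing'' direction), and then allocate a previously stuck item to the resulting poor agent. Type-normalization is the key structural input: since for each $i \in \{1,2\}$ the positive parts of $v_i$ sum to the common constant $g$ and the negative parts to $c$, an imbalance severe enough to force the stuck case must be reflected in $A_r$ containing an item of the required ``swap'' type, unless the partial allocation is already near-balanced and admits direct completion.

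The main obstacle I anticipate is proving that the stuck case can always be resolved---i.e., that a suitable $o'$ exists whenever unallocated items remain---and doing so while preserving EQ1. I expect to handle this via a careful bookkeeping argument using the identities $\sum_{o : v_1(o)\geq 0} v_1(o) = g = \sum_{o : v_2(o)\geq 0} v_2(o)$ and $\sum_{o : v_1(o)\leq 0} v_1(o) = c = \sum_{o : v_2(o)\leq 0} v_2(o)$, which force any apparent ``deficit'' in stuck items to be offset by a surplus already present in $A_r$. Termination and polynomial runtime would follow from a potential-function argument: each transfer strictly reduces $|v_1(A_1) - v_2(A_2)|$ by at least the smallest nonzero marginal, and each allocation step reduces the number of unallocated items by one. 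Finally, once all subjective items are placed, \Cref{lem:completion} absorbs any remaining objective goods and chores into a full EQ1 allocation.
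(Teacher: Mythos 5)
Your high-level plan (greedy EQ1-preserving assignments plus transfers to escape the stuck case, then \Cref{lem:completion}) is not the route the paper takes, and as written it has a genuine gap: the transfer you rely on in case \textbf{(c)} need not exist. You look for $o' \in A_r$ with $v_r(o') \leq 0$ and $v_p(o') \geq 0$, but under your own steps \textbf{(a)} and \textbf{(b)} an item only enters $A_r$ either as a good for $r$ (allocated while $r$ was poor) or as an item that is non-positive for $r$ \emph{and strictly negative for the other agent} (step \textbf{(b)} fires only when the poor agent has no remaining $\geq 0$ items). So, apart from degenerate zero-valued items, $A_r$ simply never contains an item that is a chore for $r$ and a good for $p$, and type-normalization does not rescue this. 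Concretely, take two agents and items $o_1=(10,11)$, $o_2=(-10,5)$, $o_3=(-11,5)$, $o_4=(11,-21)$ (type-normalized with $g=21$, $c=-21$). Your greedy gives $o_4$ to agent $1$ (its highest good), $o_1$ to agent $2$, then $o_2$ to agent $2$, reaching utilities $(11,16)$ with $o_3$ unallocated; $o_3$ is strictly negative for the poor agent $1$ and strictly positive for the rich agent $2$, so you are stuck, and $A_2=\{o_1,o_2\}$ contains no item with $v_2 \leq 0$. The transfer that actually works here is of the opposite kind — moving an item valued \emph{positively by both} (the objective good $o_1$) from rich to poor, i.e., the R-P transfer from \Cref{sec:symmetricvaluations} — and your proposal neither includes it nor proves that some adequate transfer is always available; with general (non-symmetric) values this is exactly the difficulty the paper flags in its ``Challenges Beyond Symmetric Tri-valued Valuations'' discussion.

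The paper avoids all of this with a much simpler decomposition exploiting $n=2$: every subjective item is positive for exactly one agent, so first give each $o \in O^\pm$ to its positive valuer; if the resulting utilities satisfy $v_1 > v_2$, type-normalization forces $v_2(O^+) - v_1(O^+) = v_1 - v_2$, so agent $2$ can be topped up from $O^+$ (favorite goods first, which keeps the partial allocation EQ1) until it catches up; the leftover objective items are then absorbed by \Cref{lem:completion}. If you want to salvage your interleaved greedy, you would need to enlarge the set of transfers (at least R-P moves of commonly liked goods and P-R moves of commonly disliked chores) and prove a nontrivial counting lemma that one is always feasible in the stuck case without overshooting EQ1 — a substantially harder argument than the direct one above.
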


\begin{proof} The algorithm starts by allocating items from $O^\pm$. For every $o \in O^{\pm}$, $o$ is allocated to the agent who values it positively. (Since $o \in O^\pm$ and $n=2$, exactly one of the agents values it positively). Once $O^{\pm}$ is allocated entirely, let $v_1$ and $v_2$ be the utilities derived by the two agents respectively, and say $v_1>v_2$. Then by type-normalization, there must be enough items in $O^+$ such that agent $2$ derives at least $v_1-v_2$ utility. The algorithm then allocates all such items to agent $2$, starting from its favorite good from the remaining items, until its utility becomes at least as much as agent $1$. Indeed, this partial allocation is EQ1. The remaining instance is the one that contains objective goods and chores. Using \Cref{lem:completion}, we get a complete EQ1 allocation, as desired.
\end{proof}

Although an EQ allocation may not exist even for two agents and one item, the assumption of type-normalization overcomes this non-existence for subjective items, as shown below.

\begin{proposition}
\label{thm:POisEQ}
For any mixed fair division instance with type-normalized valuations, two agents, and only subjective items, every PO allocation is EQ. Hence, an EQ allocation always exists and can be computed in polynomial time.
\end{proposition}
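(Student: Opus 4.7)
The plan is to exploit the rigid structure that ``subjective item + two agents'' imposes: Pareto optimality will force every item to go to the agent who strictly prefers it, and type-normalization will then automatically equalize the two bundle utilities. Existence and polynomial-time computation will follow immediately from the explicit construction.

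First, I would establish a structural lemma: in every PO allocation $A$, each item is assigned to the unique agent who values it strictly positively. The key observation is that because $n=2$ and every item $o$ is subjective, the definition of ``subjective'' (neither an objective good nor an objective chore) forces exactly one agent to have $v_i(o) > 0$ and the other to have $v_j(o) < 0$; the ``all $\geq 0$'' and ``all $\leq 0$'' cases are both ruled out. Now suppose toward contradiction that $A$ assigns some item $o$ to an agent $i$ with $v_i(o) < 0$, and let $j$ be the other agent, so $v_j(o) > 0$. Transferring $o$ from $i$ to $j$ strictly increases $v_i(A_i)$ (a negative term is dropped) and strictly increases $v_j(A_j)$ (a positive term is added), Pareto-dominating $A$ and contradicting PO.

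Next, I would apply type-normalization. Let $g$ denote the common constant with $\sum_{o : v_i(o) \geq 0} v_i(o) = g$ for every agent $i$. By the structural lemma, agent $i$'s bundle under any PO allocation $A$ is exactly the set $\{o : v_i(o) > 0\}$, so $v_i(A_i) = g$ for $i \in \{1,2\}$. Thus $v_1(A_1) = v_2(A_2)$ and $A$ is EQ. Existence and tractability are then immediate: the allocation that hands each item to its (unique) positive valuer can be produced in $O(m)$ time by a single scan, is PO (as any Pareto improvement would have to give some item to its negative valuer, reducing total utility), and hence EQ by the argument above. The only subtlety in the proof is the clean sign dichotomy for subjective items under $n=2$; once that is nailed down, the PO-to-EQ implication and the algorithm both drop out without further work, so I do not anticipate a substantive obstacle.
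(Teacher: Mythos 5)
Your proposal is correct and follows essentially the same route as the paper's proof: use the fact that with two agents every subjective item is strictly positive for exactly one agent and strictly negative for the other, observe that PO forces each item to its positive valuer, and invoke type-normalization to conclude both bundles are worth $g$. You simply spell out the Pareto-improvement transfer argument that the paper leaves implicit, which is a welcome bit of extra rigor but not a different approach.
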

\begin{proof} Since $O^+ = O^- = \emptyset$ and $n=2$, an item that is positively valued by one agent is negatively valued by the other. Consider a PO allocation $A = \{A_1, A_2\}$. Under $A$, agent $1$ gets all the items that it values positively, and agent $2$ gets the remaining items (which are all positively valued by her, by the structure of the instance). Because of type-normalization $\left(\sum_{o \in G} v_i(o) = g\right)$, we have $v_1(A_1) = g = v_2(A_2)$, hence $A$ is EQ. 
\end{proof}

\section{EQ1 and Pareto Optimality}
\label{sec:EQ1$+$PO}

In this section, we discuss equitability in conjunction with PO. Recall that for goods, an EQ1$+$PO allocation may not exist in general but exists for strictly positive valuations; and deciding its existence in general is NP-Hard \citep{10.5555/3367032.3367073}. For chores, such an allocation always exists and admits a pseudo-polynomial time algorithm, but its complexity is an open question \citep{10.5555/3398761.3398810}. We now present our results for mixtures. We first show that an EQ1$+$PO allocation may not exist even for $\{1,-1\}$ type-normalized valuations, as illustrated by the following example.

\begin{example} 
\label{ex:pos}
 Consider the following instance. Any PO allocation must allocate $\{o_1, o_2, o_3\}$ to Alice and $\{o_4, o_5, o_6\}$ to Bob or Clara. At least one of them gets a utility of at most $1$ and hence violates EQ1 with respect to Alice. 
\begin{table}[!ht]
\footnotesize
    \centering
    \begin{tabular}{c|cccccc}
         & $o_1$ & $o_2$ & $o_3$ & $o_4$ & $o_5$ &$o_6$  \\ \hline 
        Alice & \circled{$1$}  & \circled{$1$}  & \circled{$1$}  & $-1$ & $-1$ & $-1$\\
       Bob & $-1$  & $-1$ & $-1$ & \circled{$1$} & \circled{$1$} & $1$  \\
        Clara & $-1$  & $-1$ & $-1$ & $1$ & $1$ & \circled{$1$}  \\
    \end{tabular}
\end{table}

\end{example}

Not only do we encounter the non-existence of EQ1$+$PO allocations, but deciding the existence of such allocations is hard even when the valuations are type-normalized, as shown by the following result. 

\begin{restatable}{theorem}{eqOnepohard}
\label{thm:eq1pohard}
        For any mixed fair division instance, deciding the existence of an EQ1$+$PO allocation is strongly NP-Hard.
\end{restatable}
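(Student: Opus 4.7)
The plan is to establish strong NP-hardness by a polynomial-time reduction from the classical 3-Partition problem, which is strongly NP-hard: given $3m$ positive integers $b_1,\ldots,b_{3m}$ with each $b_i \in (T/4,T/2)$ and $\sum_i b_i = mT$, decide whether the multiset can be partitioned into $m$ triples each summing to $T$. Choosing 3-Partition (rather than 2-Partition as in \Cref{thm:EQ1hard}) is what delivers \emph{strong} rather than \emph{weak} NP-hardness, since the numerical values will remain polynomial in the input size.

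I would construct a mixed fair-division instance containing $m$ ``primary'' agents, corresponding to the $m$ partition classes, together with a constant number of ``auxiliary'' agents whose role is to rigidify Pareto optimality and to absorb non-element items. The $3m$ \emph{element items} are valued at $b_i$ by every primary agent and at suitably non-positive amounts by the auxiliary agents. In addition, a carefully chosen collection of dummy goods and chores is added so that (i) the total value of all goods and the total value of all chores are each constant across agents (type-normalization), and (ii) in any PO allocation, each element item must go to some primary agent --- if it were held by an auxiliary agent, swapping it to a primary agent together with an appropriate chore transfer would Pareto-dominate, thanks to the carefully tuned signs of the dummies. Under this construction, any PO allocation splits the elements among the primary agents, and each primary agent's utility equals the $b$-sum of its received elements plus a common constant contributed by dummies.

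The correctness argument then reduces to: EQ1 across primary agents must force every primary $b$-sum to be exactly $T$. For this, the gadget must be calibrated so that the EQ1 ``up to one item'' slack between two primary agents is strictly smaller than $\min_i b_i$, which is positive because $b_i > T/4$. Combined with $\sum_i b_i = mT$, any primary-sum vector satisfying the tightened inequality collapses to the uniform vector with every entry equal to $T$, giving a valid 3-partition; the converse direction is routine. I expect the main obstacle to be precisely this calibration step: simultaneously achieving (a) type-normalization for every agent, (b) rigidity of PO enough to pin each element item to a primary agent, and (c) an EQ1 tolerance across primary agents below $\min_i b_i$. The natural way to square these constraints is to force every primary agent to hold the same fixed multiset of dummies under any PO allocation (so that differences between primary bundles arise only from element-item loads), and to replace each $b_i$ by a scaled version $M b_i + \delta$ with a uniform shift $\delta$ so that removing any one positively-valued item from a primary bundle cannot close a gap of even a single unscaled $b$-unit. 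Verifying both directions of the equivalence then becomes a direct calculation.
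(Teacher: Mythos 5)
Your high-level plan coincides with the paper's: a reduction from 3-Partition with one agent per partition class, a small number of auxiliary agents, and dummy items that enforce type-normalization and rigidify Pareto optimality so that every element item must land on a primary agent. Up to that point you are on the right track, and choosing 3-Partition over 2-Partition is indeed what upgrades \Cref{thm:EQ1hard} to strong NP-hardness. However, the proposal stalls exactly where you say it does --- the ``calibration'' --- and the fix you sketch does not work. Scaling the element values to $Mb_i+\delta$ makes the smallest positively-valued item in a primary bundle worth at least roughly $MT/4$, so the EQ1 slack \emph{between two primary agents} becomes enormous rather than small: removing one good from the richer primary bundle can close any gap up to $MT/4$, i.e., an unscaled gap of up to about $T/4$. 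Pairwise EQ1 among the primary agents therefore cannot, by itself, force each primary $b$-sum to equal $T$, no matter how you scale, and your target of driving the tolerance below $\min_i b_i$ is unattainable when the removable item itself is worth at least $\min_i b_i$.

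The paper closes this gap with a mechanism your proposal is missing: an \emph{anchor}. A single dummy agent receives (forced by PO) two dummy goods worth $T$ and $(r-1)T$ to it, so even after the hypothetical removal of its most valuable good its bundle is still worth exactly $T$. EQ1 against this rich anchor yields the one-sided constraint $v_i(A_i)\ge T$ for every primary agent $i$; since PO forces all element items onto primary agents and their total value is exactly $rT$, the $r$ inequalities must all be tight, and the window $b_i\in(T/4,T/2)$ then forces triples. No control of the slack between two primary agents is ever needed. To complete your proof, replace the scaling idea with such an anchor gadget and verify that the dummy items' values can simultaneously be chosen to keep the instance type-normalized (the paper uses values $-T$ and $-(r-3)T$ for the set agents so that every agent values the grand bundle at $2T$).
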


\begin{proof} We present a reduction from $3$-Partition, known to be strongly NP-hard, where the problem is to decide if a multiset of integers can be partitioned into triplets such that all of them add up to a constant. Formally, the input is a set $S = \{b_1, b_2, \ldots b_{3r}\}$; $r\in \mathbb{N}$; and the output is a partition of $S$ into $r$ subsets $\{S_1, S_2, \ldots S_r\}$
such that $\sum_{b_i \in S_i} b_i = T = \frac{1}{r} \sum_{b_i \in S} b_i$. 
Given any instance of $3$-Partition, we construct an instance of allocation problem as follows. We create $r$ set-agents namely $\{a_1, a_2, \ldots a_r\}$ and one dummy agent, $a_{r+1}$. We create $3r$ set-items namely $\{o_1, o_2, \ldots o_{3r}\}$ and two dummy items $\{o_{3r+1}, o_{3r+2}\}$. All the set agents value the set items identically at $b_i$ and the two dummy items at $-T$ and $-(r-3)T$ respectively. The dummy agent values the set-items $\{o_1, \ldots o_{3r-1}\}$ at $0$, $o_{3r}$ at $-(r-2)T$ and the two dummy items at $T$ and $(r-1)T$ respectively. Note that this is a type-normalized instance such that every agent values all the items together at $2T$.
This completes the construction (see \Cref{tab:reductiontwo}). 

\begin{table}[H]
    \centering
    \begin{tabular}{c|cc|ccc}
         & $o_1$ $~\ldots~$ $o_{3r-1}$ & $o_{3r}$  & $o_{3r+1}$ & $o_{3r+2}$ \\ \hline 
       $a_1$  & $b_1$ $~\ldots~$  $b_{3r-1}$ & $b_{3r}$ & $-T$ & $-(r-3)T$  \\
     $a_2$ & $b_1$  $~\ldots~$  $b_{3r-1}$ & $b_{3r}$ & $-T$ & $-(r-3)T$  \\
       $a_r$  & $b_1$  $~\ldots~$  $b_{3r-1}$ & $b_{3r}$ &$-T$ & $-(r-3)T$ \\
       $a_{r+1}$ &$0$ $~\ldots~$ $0$ & $-r+2T$ & $T$ & $(r-1)T $ 
    \end{tabular}
    \vspace{0.3cm}
    \caption{Reduced instance as in the proof of \Cref{thm:eq1pohard}}\vspace{0.3cm}\label{tab:reductiontwo}
\end{table}

We now argue the equivalence.
\paragraph{Forward Direction.} Suppose $3$-Partition is a yes-instance and $\{S_1, S_2, \ldots S_r\}$ is the desired solution. Then, the corresponding EQ1$+$PO allocation, $A$, can be constructed as follows. For $a_i : i \in [r]$, we set $A(a_i) = \{o_j: o_j \in S_i\}$. Finally, $A(a_{r+1}) = \{o_{3r+1}, o_{3r+2}\}$. It is easy to see that $A$ is an EQ1 allocation since $v_i(A_i)) = T$ for all $a_i : i \in [r]$ and $v_i(A_i) \geq v_{r+1}(A(a_{r+1}) \setminus \{o_{3r+2}\}) = T$. Also, $A$ is PO since it is a welfare-maximizing allocation where each item is assigned to an agent who values it the most.


\paragraph{Reverse Direction.} Suppose that $A$ is an EQ1$+$PO allocation. Then, both the dummy items $\{o_{3r+1}, o_{3r+2}\}$ must be allocated to the dummy agent $a_{r+1}$, who is the only agent who values both the items positively. (Since any set-agent derives a negative utility from these items, allocating them to it violates PO). Since $A$ is also EQ1, this forces every set-agent to derive the utility of at least $T$ under $A$ so that when it removes the item $\{o_{3r+2}\}$ from $A(a_{r+1})$, EQ1 is preserved.
\end{proof}

 Towards tractability, we now show that an EQ1$+$PO allocation can be computed efficiently for $\{-1,0,1\}$ valuations (not necessarily normalized), if one exists. It is relevant to note here that the presence of zeros has been a source of computational challenge for EQ1$+$PO allocations even in the only goods setting \citep{10.5555/3367032.3367073}.
 We briefly describe the algorithm here. We first reduce the partial instance $\mathcal{I} := O^\pm \cup O^+$ to a goods-only instance $\mathcal{I^G}$ by setting $v_i(o) = 0$ for all $\{o \in O^\pm: v_i(o)=-1\}$. We then compute the Nash optimal allocation\footnote{Note that the Nash optimal solution maximizes the geometric mean of agents' utilities. It can be computed for binary valuations in polynomial time \citep{10.5555/3237383.3237392}.} $A'$ for $\mathcal{I^G}$. We extend the partial allocation $A$ to a complete allocation by first allocating the chores to an agent who values them at $0$ (if such an agent exists). Otherwise, we use \Cref{lem:completion} to allocate the remaining chores which are all valued at $-1$ by all the agents. If the completion of $A'$ is EQ1$+$PO, \Cref{alg:EQ1_PO_n} outputs the complete allocation, otherwise the instance does not admit any EQ1$+$PO allocation.
 

\begin{restatable}{theorem}{EQonePOwow}
\label{lem:typen101}
    For any mixed fair division instance with symmetric tri-valued (even non-normalized) valuations ($\{-1, 0, 1\}$), an EQ1$+$PO allocation can be found efficiently, if one exists.
\end{restatable}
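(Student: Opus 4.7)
My plan is to prove the correctness of the algorithm already sketched: restrict to the partial instance on $O^+\cup O^\pm$ and construct a $\{0,1\}$-valued goods instance $\mathcal{I^G}$ by relabelling every $-1$ valuation on an item of $O^\pm$ to $0$; compute a Nash-optimal allocation $A'$ of $\mathcal{I^G}$ in polynomial time, with a tie-break among Nash-optimal allocations that prefers assigning subjective items to agents who value them at $1$ in the original instance; complete $A'$ on the remaining chores $O^-$ by assigning each chore to some agent that values it at $0$ if possible, and otherwise invoking \Cref{lem:completion} on the chores valued at $-1$ by every agent; finally, check whether the output is EQ1$+$PO, returning it or declaring infeasibility accordingly. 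Proving the theorem reduces to two tasks: soundness (the returned allocation is EQ1$+$PO) and completeness (if the algorithm fails to certify EQ1$+$PO, no such allocation exists).

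\textbf{Soundness.} For soundness, I would first show that $A'$, interpreted under the original valuations of $\mathcal{I}$, is PO on $O^+\cup O^\pm$. The critical observation is that our Nash-optimal allocation assigns every subjective item to an agent valuing it at $1$: otherwise the item sits with an agent valuing it at $-1$ in $\mathcal{I}$ (equivalently $0$ in $\mathcal{I^G}$), and reallocating it to a $1$-valuing agent either strictly increases Nash welfare (contradicting Nash optimality) or leaves the welfare unchanged at $0$ (contradicting the tie-break). Hence no agent owns a subjective $-1$-item, the utility profile in $\mathcal{I}$ matches that in $\mathcal{I^G}$, and Nash optimality lifts to PO on the restricted instance. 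The chore-extension preserves PO: a chore valued at $0$ by some agent is given to that agent at zero cost, and a chore valued at $-1$ by all agents cannot be reassigned without making another agent equally worse off. Since EQ1 of the final output is explicitly verified before returning, the algorithm only returns genuine EQ1$+$PO allocations.

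\textbf{Completeness.} For completeness, I would argue the contrapositive: whenever an EQ1$+$PO allocation $A^{\star}$ exists, the completed $A'$ is also EQ1. The argument rests on two structural facts. First, every PO allocation of $\mathcal{I}$ must assign each item of $O^\pm$ to an agent valuing it at $1$ and each chore of $O^-$ to an agent valuing it at $0$ whenever feasible; hence $A^\star$ restricts to a PO allocation of $\mathcal{I^G}$ with the same utility profile on $O^+\cup O^\pm$. Second, among PO allocations of a $\{0,1\}$-valued goods instance, Nash-optimal allocations attain the leximin-optimal sorted utility vector. Combining these, the sorted utility vector of $A'$ on $O^+\cup O^\pm$ weakly leximin-dominates that of $A^\star$, so its spread $\max_i v_i(A'_i)-\min_i v_i(A'_i)$ is at most the spread of $A^\star$, which is at most $1$ by EQ1. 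Thus $A'$ is EQ1 on the partial instance, and the chore completion via \Cref{lem:completion} together with the $0$-valued-chore step preserve this property: the former by the guarantee of the Completion Lemma, the latter because $0$-valued chores do not alter any agent's utility.

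\textbf{Main obstacle.} The principal technical challenge will be establishing the leximin-dominance property of Nash-optimal allocations for $\{0,1\}$-valued goods, since Nash-optimal allocations are not unique and their utility vectors are a priori only required to share the same Nash welfare. I expect to invoke the standard fact, provable by a pairwise-exchange argument and the integrality of the Nash product for binary valuations, that all Nash-optimal allocations share the same sorted utility vector and are themselves leximin-optimal among PO allocations. A secondary subtlety is that the chore-completion must be aligned with the EQ1 spread already achieved on $O^+\cup O^\pm$: the assignment of each $0$-valued chore in $O^-$ must not disturb that spread, and the chain of transfers implicit in \Cref{lem:completion} must not overshoot and push a previously rich agent below the new minimum utility. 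A careful accounting of utilities before and after each of the three stages of the algorithm will close this gap.
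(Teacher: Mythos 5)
Your algorithm and your soundness argument coincide with the paper's (reduce to the $\{0,1\}$ goods instance $\mathcal{I^G}$, compute a Nash-optimal allocation, complete with the chores, verify at the end), and that half is fine. The gap is in your completeness argument, specifically in the sentence ``\ldots is at most the spread of $A^\star$, which is at most $1$ by EQ1. Thus $A'$ is EQ1 on the partial instance.'' The allocation $A^\star$ is EQ1 only on the \emph{full} instance; its restriction $\overline{A^\star}$ to $O^+\cup O^\pm$ can have arbitrarily large spread, because the universal chores in $O^-$ can later be piled onto the rich agents to restore equity. Concretely: two agents, three goods valued $(1,0)$ by (agent $1$, agent $2$), and two chores valued $-1$ by both. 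Every PO allocation gives all three goods to agent $1$, so the partial utility vector is $(3,0)$ and the Nash-optimal $A'$ is \emph{not} EQ1 on the partial instance; yet assigning both chores to agent $1$ yields the utility vector $(1,0)$, which is EQ1$+$PO. So your intended conclusion ``$A'$ is EQ1 on the partial instance whenever an EQ1$+$PO allocation exists'' is simply false, and the completeness proof cannot proceed along these lines. (A secondary flaw: even granting leximin dominance, ``leximin-dominates $\Rightarrow$ smaller spread'' fails in general for equal-sum vectors, e.g.\ $(3,3,3,9)$ leximin-dominates $(2,5,5,6)$ but has larger spread.)

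What actually has to be shown is a quantitative statement involving the chores: letting $t=|O^-|$ be the number of universal $-1$ chores and $v_r$ the second-lowest utility level under the partial allocation, the completion is EQ1 iff the total excess $\sum_{s}(v_s-v_r)$ of the violators is at most $t$. The paper proves that if this excess exceeds $t$ for the Nash-optimal $A'$ but is at most $t$ for the restriction $\overline{A^\star}$ of some PO allocation (both restrictions having the same utilitarian sum $m'$ by non-wastefulness), then $\overline{A^\star}$ has strictly larger Nash welfare than $A'$ by an AM--GM argument, contradicting Nash optimality. Your instinct to exploit the ``most equitable among PO'' property of Nash-optimal allocations for binary valuations is the right one (the paper uses Freeman et al.'s Lemma~22 for the partial-instance characterization), but the comparison must be made against the chore budget $t$, not against a spread of $1$.
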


\begin{proof}
We show that \Cref{alg:EQ1_PO_n} returns an EQ1$+$PO allocation whenever it exists. Towards correctness, we first establish the following result.

\begin{lemma}
\label{lem:NashEQ1PO}
     There is an EQ1$+$PO allocation for the partial instance $\mathcal{I}:= O^\pm \cup O^+$ if and only if the Nash optimal allocation $A'$ for the instance $\mathcal{I^G}$ is EQ1.
\end{lemma}

\begin{proof} 
Suppose the Nash optimal allocation $A'$ for instance $\mathcal{I^G}$ is EQ1. We will show that $A'$ is EQ1$+$PO for $\mathcal{I}$ as well. Note that $A'$ is EQ1$+$PO for $\mathcal{I^G}$ (since Nash satisfies PO). Since $A'$ is PO for $\mathcal{I^G}$, it never allocates an item to an agent who values it at $0$. Therefore, in the instance $\mathcal{I}$, $A'$ never allocates any item from $O^\pm \cup O^+$ to an agent who values it at $-1$ or $0$. Hence, all items are allocated to agents who value them the most, and hence $A'$ is PO (and EQ1) for $\mathcal{I}$. 

On the other hand, suppose there is an EQ1$+$PO allocation $A$ for instance $\mathcal{I}$. If the Nash optimal allocation for $\mathcal{I^G}$ is EQ1, we are done. Otherwise, suppose that the Nash optimal allocation for $\mathcal{I^G}$ is not EQ1. We will now argue by contradiction that this is not possible. Since $A$ is PO for the instance $\mathcal{I}$, any item $o \in O^\pm$ must have been allocated to an agent $a$ who values it at $1$. Indeed, if not, then there is a Pareto improvement by allocating $o$ to $a$, which makes $a$ strictly better off without making any other agent worse off. Likewise, all items $o \in O^+$ are allocated to respective agents who value them at $1$. So, we have $\sum_{i \in N} v_i(A_i) = m'$ where $m' = |O^\pm \cup O^+|$. Therefore, $A$ is not only PO but also achieves the maximum utilitarian welfare in $\mathcal{I}$. Consider the same allocation $A$ in the instance $\mathcal{I^G}$. Since the only modification from $\mathcal{I}$ to $\mathcal{I^G}$ is that for the agents who valued items in 
$O^\pm$ at $-1$ in $\mathcal{I}$, now value them at $0$ in $\mathcal{I^G}$ and everything else remains the same. So, we have $\sum_{i \in N} v_i(A_i) = m'$ in $\mathcal{I^G}$ as well. Therefore, $A$ maximizes the utilitarian welfare in $\mathcal{I^G}$ and hence, is PO. Also, $A$ is EQ1 by assumption. Therefore, for the instance $\mathcal{I^G}$, $A$ is an EQ1$+$PO allocation. But, the following result by \cite{freeman2019arxiv} shows that if an instance with binary valuations admits some EQ1 and PO allocation, then every Nash optimal allocation must satisfy EQ1.

\begin{lemma}[Lemma 22, \citep{freeman2019arxiv}] Let $\mathcal{I}$ be a fair division instance with binary valuations. If  $\mathcal{I}$ admits some EQ1 and PO
allocation, then every Nash optimal allocation must satisfy EQ1.
\end{lemma}

Therefore, the Nash optimal allocation in $\mathcal{I^G}$ satisfies EQ1, which is a contradiction to our assumption.
\end{proof}

If $A$ is EQ1$+$PO for $\mathcal{I}$, then we show that the allocation of $O^-$ as mentioned above maintains EQ1$+$PO. If $A$ is not EQ1$+$PO, then we show that if the EQ1 violators under $A$ can not be resolved by using the remaining chores, then there is no EQ1$+$PO allocation for the original instance. We defer the details to the appendix.
\end{proof}

\begin{algorithm}[!h]
\caption{Computing an EQ1$+$PO allocation, if it exists}
\label{alg:EQ1_PO_n}
\textbf{Input:} $n$ agents, $m$ items and $\{-1, 0, 1\}$ valuations \\
\textbf{Output:} An EQ1$+$PO allocation, if one exists \\
\begin{algorithmic}[1]
    \STATE $A \gets$ An empty allocation 
    \STATE For $o \in O^\pm$ such that $v_i(o) = -1$, set $v_i(o)=0$
    \STATE 
 Let $A'$ be the Nash optimal allocation on $O^{\pm} \cup O^+$.
    \STATE $A_i \gets A_i \cup A'_i$ 
     \WHILE{$\exists o \in O^-: v_i(o) = 0$ for some agent $i$}
    \STATE $A_i \gets A_i \cup \{o\}$
    \ENDWHILE \label{step:step10} 
    \WHILE{$\exists o \in O^-$}
    \STATE $i \gets$ a rich agent under $A_i$
    \STATE $A_i \gets A_i \cup \{o\}$
    \ENDWHILE
    \RETURN $A$ 
\end{algorithmic}
\end{algorithm}

We now show that EQ1$+$PO allocation always exists for any number of agents with identical valuations and can be computed efficiently. We also get existence and efficient computation for two agents and $\{-1,0,1\}$ type-normalized valuations.

\begin{proposition}
    For any mixed fair division instance with identical valuations, an EQ1$+$PO allocation always exists and admits a polynomial time algorithm. 
\end{proposition}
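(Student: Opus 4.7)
The plan is to reduce the statement to two already-established facts. First, I would observe that under identical additive valuations, every item $o$ takes a single value $v(o)$ shared by all agents, so $o$ is non-negative for every agent (an objective good) or non-positive for every agent (an objective chore). Hence the instance automatically has objective valuations, and \Cref{thm:obj} yields an EQ1 allocation $A$ in polynomial time.

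Next, I would establish that \emph{every} allocation is Pareto optimal under identical additive valuations. The key invariance is that
\[
\sum_{i \in [n]} v_i(A_i) \;=\; \sum_{i \in [n]} \sum_{o \in A_i} v(o) \;=\; \sum_{o \in [m]} v(o),
\]
which is a constant depending only on the instance, not on $A$. A putative Pareto improvement $A'$ would yield $\sum_i v_i(A'_i) > \sum_i v_i(A_i)$, contradicting this invariance. Hence \emph{any} allocation, in particular the EQ1 allocation produced in the first step, is PO.

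Combining the two observations, the output of the algorithm of \Cref{thm:obj} is simultaneously EQ1 and PO, and the whole procedure runs in polynomial time. There is no real obstacle here: identical valuations collapse the problem to an objective instance and make Pareto optimality automatic, so the proposition follows as a direct corollary of \Cref{thm:obj} together with the one-line invariance argument above.
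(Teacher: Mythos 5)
Your proposal is correct and follows essentially the same route as the paper: reduce identical valuations to objective valuations, invoke \Cref{thm:obj} for EQ1, and observe that every complete allocation is Pareto optimal under identical valuations. Your justification of that last observation via the invariance of $\sum_i v_i(A_i)$ is in fact slightly cleaner than the paper's item-by-item reallocation argument, which as stated glosses over zero-valued items (reallocating a $0$-valued good does not make its owner strictly worse off), whereas the constant-sum argument handles all cases uniformly.
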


\begin{proof} Since identical valuations are a subset of objective valuations, we use the algorithm of \Cref{thm:obj} to obtain an EQ1 allocation $A$. We claim that $A$ is also PO. Notice that for identical valuations, any complete allocation is PO. Indeed, consider any complete allocation $A^\prime$ that allocates a good $g$ to an agent $i$. Under any reallocation of $g$, $i$ becomes strictly worse off. On the other hand, consider any chore $c$ allocated to some agent. Then, under any reallocation of $c$ to any agent $j$, the receiving agent $j$ becomes strictly worse off. Therefore, since $A$ is a complete allocation, it is also PO. This settles our claim.
\end{proof} 

\begin{restatable}{theorem}{eqonepowowtype}
\label{lem:type101}
    For any mixed fair division instance with two agents and type-normalized $\{-1, 0, 1\}$ valuations, an EQ1$+$PO always exists and can be computed efficiently.
\end{restatable}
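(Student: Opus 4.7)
The plan is to exploit the rigid structure of Pareto-optimal allocations for two agents under $\{-1, 0, 1\}$ values. Classifying items by their value pair $(v_1(o), v_2(o))$, a case analysis of possible Pareto improvements forces items of types $(1,-1), (1,0), (0,-1)$ to be held by agent $1$ in every PO allocation and items of types $(-1,1), (0,1), (-1,0)$ to be held by agent $2$; only items of types $(1,1), (-1,-1), (0,0)$ remain free. Writing $X = n_{1,1}$ and $Z = n_{-1,-1}$ for these free counts, and letting $x \in \{0,\dots,X\}$ and $z \in \{0,\dots,Z\}$ denote how many $(1,1)$ and $(-1,-1)$ items agent $1$ receives, type-normalization gives $n_{1,-1} + n_{1,0} + n_{1,1} = g = n_{-1,1} + n_{0,1} + n_{1,1}$, hence $n_{1,-1} + n_{1,0} = n_{-1,1} + n_{0,1}$. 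A direct calculation then yields the clean identity $u_1 - u_2 = (Z - X) + 2(x - z)$, with $x - z$ ranging over every integer in $[-Z, X]$.

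The expression $(Z - X) + 2(x - z)$ traverses all integers of parity $X + Z$ in $[-X - Z, X + Z]$. I would pick $(x, z)$ with $x - z = (X - Z)/2$ when $X + Z$ is even (forcing $u_1 = u_2$) and with $|u_1 - u_2| = 1$ when $X + Z$ is odd; feasibility in either case is immediate from the range of $x - z$. For $\{-1, 0, 1\}$ valuations, a gap of at most $1$ already implies EQ1: if the unit gap favors agent $j$, then either $u_j \geq 1$, in which case $A_j$ contains a $1$-valued good whose removal equalizes, or $u_i \leq -1$, in which case $A_i$ contains a $-1$-valued chore whose removal equalizes. Since the gap is exactly $1$, at least one of these two conditions must hold.

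The construction is explicit and gives an $O(m)$ procedure: bucket the items by value pair, assign the six forced types, choose $(x, z)$ according to the parity rule, and distribute $(0, 0)$ items arbitrarily. Alternatively, since existence is now guaranteed, one can invoke \Cref{lem:typen101} for polynomial-time computation. I expect the main technical obstacle to be stating the Pareto-optimal classification cleanly, especially the $(0, -1)$ and $(0, 1)$ cases, which rely on the weak-domination definition of Pareto dominance, and noticing that type-normalization produces the cancellation that collapses the gap to a single parity-constrained parameter; once these observations are in place, the parity argument and the reduction from ``gap $\leq 1$'' to EQ1 are routine.
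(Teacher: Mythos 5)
Your proof is correct, and it takes a genuinely different route from the paper. The paper proves this theorem via a greedy algorithm (its Algorithm~3): it first gives each item of $O^\pm$ to the agent who values it at $1$, routes the semi-objective items ($(1,0)$, $(0,1)$, $(0,-1)$, $(-1,0)$ in your notation) to the agent who does not value them at $0$, and then balances by iteratively handing the $(1,1)$ items to the currently poor agent and the $(-1,-1)$ items to the currently rich agent; correctness is an invariant argument (``EQ1 is preserved at every step'') combined with type-normalization to show the post-$O^\pm$ imbalance can be absorbed by $O^+$. You instead characterize the entire PO frontier statically, derive the closed-form gap $u_1-u_2=(Z-X)+2(x-z)$, and read off that a gap of absolute value at most $1$ (hence EQ1, by your parity-and-removal observation) is always attainable. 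Your argument is arguably cleaner and reveals more: it shows that only the \emph{goods}-side normalization $n_{1,-1}+n_{1,0}=n_{-1,1}+n_{0,1}$ is actually needed (the chores-side constant $c$ never enters your cancellation), whereas the paper's invariant argument uses both halves of type-normalization in its two balancing phases. Two small points you should make explicit in a full write-up: (i) sufficiency of your classification --- that \emph{every} allocation respecting the six forced assignments is PO --- which follows in one line because such an allocation gives each item to an agent who values it maximally and is therefore utilitarian-optimal; and (ii) that $(X-Z)/2$ (respectively $(X-Z\pm 1)/2$) is an integer precisely under the stated parity of $X+Z$ and lies in $[-Z,X]$, which you assert and which does check out. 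Neither is a gap, just a step to spell out.
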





When we increase the number of agents from $2$ to $3$, an EQ1$+$PO allocation may not exist even with binary type-normalized valuations. Consider $3$ agents and $6$ items such that two of them value the items at $\{1,1,1, 0,0,0\}$ and the last agents values at $\{0,0,0,1,1,1\}$. Here, every PO allocation violates EQ1. We note here that the non-existence of EQ1$+$PO allocation for the goods-only setting is due to the presence of items valued at $0$ and such allocation always exists for strictly positive valuations \citep{10.5555/3367032.3367073}. But for mixtures, Example \ref{ex:pos} suggests that even if there are no zero-valued items, an EQ1$+$PO allocation may not exist.





\section{EQX and Social Welfare}
\label{sec:EQ1+welfare}
In this section, we discuss the computational complexity of finding allocations that maximize the utilitarian welfare (UW) or egalitarian welfare (EW) within the set of EQX allocations for mixtures (whenever such allocations exist). 
%
Even in the case when all the underlying items are goods, finding a UW or EW allocation within the set of EQ1 allocations (UW/EQ1 or EW/EQ1), for a fixed number of agents, is weakly NP-Hard \citep{10.5555/3545946.3598685}. This rules out the possibility of a polynomial-time algorithm for mixed instances. Nonetheless, we present pseudo-polynomial time algorithms for finding such allocations in mixed instances with a constant number of agents. 
Our algorithmic technique extends those developed by \citep{AZIZ2023773} to the mixed setting, and relies on dynamic programming. 

\begin{restatable}{theorem}{UWEQOneDP}
\label{thm:psuedo}
    For any mixed fair division instance with a constant number of agents, computing a Utilitarian or Egalitarian maximal allocation within the set of EQX allocations (whenever they exist) admits a pseudo-polynomial time algorithm.
\end{restatable}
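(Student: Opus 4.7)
The plan is to adapt the dynamic-programming framework of \citet{AZIZ2023773} to the mixed setting. The main idea is that, although the space of allocations is exponentially large, the EQX property of a final allocation can be certified from a small number of statistics per bundle, and those statistics admit simple updates as items are assigned.

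For each agent $i$ I would track the triple $(u_i, g_i, c_i)$, where $u_i = v_i(A_i)$, $g_i = \min\{v_i(o) : o \in A_i,\, v_i(o) \geq 0\}$ is the smallest good in $i$'s bundle according to $i$ (set to $+\infty$ when no such item exists), and $c_i = \max\{v_i(o) : o \in A_i,\, v_i(o) \leq 0\}$ is the least-negative chore in $A_i$ according to $i$ (set to $-\infty$ when none). The global state is the $n$-tuple $((u_1, g_1, c_1), \ldots, (u_n, g_n, c_n))$. These quantities suffice to verify EQX: for any pair $i,j$ with $u_i < u_j$, the two ``for all'' EQX conditions collapse into $u_i \geq u_j - g_j$ and $u_i - c_i \geq u_j$, both computable from the stored state. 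Items valued $0$ are treated as both goods and chores consistent with the paper's definitions, which aligns with the EQX requirement that every removable item resolve the inequity.

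The DP processes items one at a time. From each reachable state and for each agent $i$, generate a successor by assigning the next item $o$ to $i$: set $u_i \gets u_i + v_i(o)$; if $v_i(o) \geq 0$ update $g_i \gets \min(g_i, v_i(o))$; if $v_i(o) \leq 0$ update $c_i \gets \max(c_i, v_i(o))$. After all $m$ items are processed, filter the terminal states to those satisfying the EQX conditions above (an $O(n^2)$ check per state), and return one maximizing $\sum_i u_i$ (for UW) or $\min_i u_i$ (for EW); if the filtered set is empty, report that no EQX allocation exists.

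For complexity, let $V = \max_{i,o} |v_i(o)|$. Each $u_i$ lies in $[-mV, mV]$ and takes $O(mV)$ values, while each of $g_i, c_i$ takes $O(V)$ values. Hence the state space has size $O((mV \cdot V^2)^n) = O(m^n V^{3n})$, polynomial in $m$ and $V$ for constant $n$ and therefore pseudo-polynomial in the input size; each state admits $n$ transitions per item, so the total runtime is polynomial in the state-space size. The main obstacle is arguing that the triple $(u_i, g_i, c_i)$ really is sufficient to decide EQX for the final allocation: the key observation is that under the ``for all'' clauses of EQX, the binding item to remove is always the one of smallest marginal magnitude in the relevant bundle, so retaining only the extremal values $g_j$ (rich side) and $c_i$ (poor side) loses no information, and the rest of the bundle's composition can be safely forgotten.
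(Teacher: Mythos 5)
Your proposal is correct and follows essentially the same route as the paper: a forward dynamic program over items whose state records, for each agent, the current utility together with the least-valuable good and least-disliked chore in their bundle, followed by an EQX filter and a welfare maximization over terminal states. The only cosmetic difference is that you store the extremal items' \emph{values} rather than their indices (slightly shrinking the state space to $O(m^n V^{3n})$ versus the paper's $O(m^{2n+1}V^n)$), which does not change the pseudo-polynomial bound.
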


\begin{proof} We present a dynamic programming algorithm that keeps a set of states representing the set of possible allocations. At each state, it considers allocating the item $o_k$ to one of the $n$ agents. Finally, it chooses the state that optimizes for social welfare and respects EQX. We denote $\sum_{o \in O} \max_i v_i(o) = V_g$ and  $\sum_{o \in O} \min_i v_i(o) = -V_c$. The states are of the form $(k, \mathbf{v}, \mathbf{g}, \mathbf{c})$ where $k \in [m], v_i \in [-V_c, V_g], ~g_i, c_i \in [m] ~\forall i \in [n]$.  The items $g_i$ and $c_i$ refer to the least valuable good and the least disliked chore in the bundle of agent $i$. The state 
$(k,\mathbf{v}, \mathbf{g}, \mathbf{c}):=$ True if and only if there is an allocation of objects $\{o_1, \ldots, o_k\}$ such that the value of agent $i$ is at least $v_i$ and the bundle of $i$ contains $g_i$ and $c_i$ as the least valuable good and the least disliked chore respectively. The initial state $(0; 0,0 \ldots 0;\mathbf{0}, \mathbf{0})$ refers to the empty allocation and is vacuously true. Consider the case when $o_1$, which is a good for agent $i$, is allocated to $i$. Then, the state $(1, 0, \ldots v_i(o_1), \ldots 0; \emptyset, \ldots o_1, \ldots \emptyset; \emptyset \ldots \emptyset)$ is True and every other state is False. The state that corresponds to the allocation of $o_k$ to some agent, say $i$ is given as follows:  
If $g_i \neq o_k$ and $c_i \neq o_k ~\forall i \in [n]$, then
\begin{align}  
\label{eq:eq1}
(k,v_1, \ldots, v_i, \ldots v_n; g_1,\ldots,g_n; c_1, \ldots, c_n) = \nonumber \\
\lor_{i \in [n]} (k-1,v_1, \ldots, v_i - v_i(o_k), \ldots v_n; \nonumber \\ g_1,\ldots g_n; c_1,\ldots,c_n)
\end{align} 

where if $o_k$ is a good for agent $i$ then $v_i(g_i) \leq v_i(o_k)$ else, $v_i(c_i) \geq v_i(o_k)$. Else, if $g_i = o_k$, then, 
\begin{align} 
\label{eq:eq2}
(k,v_1, \ldots v_n; g_1,\ldots, g_i=o_k, \ldots g_n; c_1, \ldots, c_n) = \nonumber \\ \lor_{g \in [o_1, o_{k-1}]} (k-1,v_1, \ldots, v_i - v_i(o_k), \ldots v_n; \nonumber \\ g_1,\ldots ,g, \ldots g_n; c_1, \ldots, c_n) 
\end{align}


where $g \in \cup_{r \in [k-1]} o_r$ and $v_i(o_k) \leq v_i(g)$.

Else, if $c_i = o_k$, then, 
\begin{align}
(k,v_1, \ldots, v_n; g_1,\ldots, g_n; c_1, \ldots c_i=o_k, \ldots c_n) = \nonumber \\ \lor_{c \in [o_1, o_{k-1}]} (k-1,v_1, \ldots, v_i + v_i(o_k), \ldots v_n; \nonumber \\ g_1,\ldots ,g_i, \ldots g_n; c_1, \ldots c, \ldots c_n) 
 \end{align}

where $c \in \cup_{r \in [k-1]} o_r$ and $v_i(o_k) \geq v_i(c)$.
 The states $(m,\mathbf{v}, \mathbf{g}, \mathbf{c})$ correspond to the final allocation. An allocation corresponding to one of the final states is EQX iff $v_i \geq v_j - v_j(g_j)$ and $v_i - v_i(c_i) \geq v_j$ for every pair $i, j \in [n]$. We defer the correctness and runtime analysis to the appendix. 
\end{proof}

\section{Conclusion} 
We show that an EQ1 allocation may not exist for mixtures of goods and chores, and finding an EQ1 allocation is NP-Hard, unlike the goods-only and chores-only settings. 
We present efficient algorithms for $\{-1,0,1\}$ normalized valuations.  Previously, EQX allocations were known to exist only for $2$ agents and objective valuations, while the case of subjective valuations remained computationally intractable even with $2$ agents. We make progress by presenting an efficient algorithm that outputs an EQX allocation for $\{-1, 1\}$ normalized valuations. 

Further, we show the hardness of EQ1$+$PO allocations and complement it by efficient algorithms for restricted valuations. 
Finally, we present a pseudo-polynomial time algorithm for finding welfare-maximizing EQX allocations.

Deciding the existence of EQ1 allocations under normalized/type-normalized valuations for more than two agents and general valuations stands as an interesting open question. An immediate open direction is to resolve the EQ1 existence and computation for instances with valuations ternary valuations ($\{-\alpha, 0, \beta\}$ where $\alpha \neq \beta$). Such valuations have been considered in the context of EF1 \citep{Bhaskar2024Trilean} and EF1+PO \citep{10.1007/978-3-030-58285-2_1}.


\bibliographystyle{plainnat}
\bibliography{References}

\clearpage
\appendix

\section*{Technical Appendix}

\section{Additional Related Work}
\label{sec:additionalrelatedwork}
As mentioned previously, Aleksandrov \citep{Alek2020} termed equitability as jealousy-freeness (JF), and defined its up-to-one-style-relaxations JFX and JF1 wherein, in addition to requiring that the hypothetical removals are restricted to non-zero valued items, these definitions also allow hypothetical addition of an item to a bundle. 
Aleksandrov \citep{Alek2023} segregated the notions where addition/duplication is allowed and renamed these notions to 
DJFX and DJF1 (that is, JFX and JF1 up to removal or duplication). Their JFX definition is analogous to our EQX definition, albeit they restrict the removals to non-zero valued items. However, we note that their Theorem 3 does not hold true. In particular, Algorithm $1$ fails to achieve JFX. Reason being, Algorithm $1$ proceeds in the following greedy manner: until all goods are allocated, it
picks the least utility agent and gives them their most
preferred remaining good. Then, until all chores are allocated, it picks the greatest utility agent and gives
them their least preferred remaining chore. Our \Cref{ex:notEQX} shows that such an algorithm does not produce an EQX allocation (which coincides with JFX allocation in this case) even with identical valuations and two agents. Removal of any non-zero valued 
chore from the bundle of Alice (who is jealous of Bob) works to get rid of the inequitability, however, removal of any non-zero valued good from Bob's bundle still results in Bob being the richer agent.

\section{Omitted Details of \Cref{sec:EQ1}}


\propobj*

\begin{proof} The algorithm iteratively picks a poor agent---breaking ties arbitrarily---and allocates it the most valuable good among the remaining items in $O^+$ according to its preference. Once $O^+$ is exhausted, it picks a rich agent who then receives its most disliked item from $O^-$. 

The correctness of the above algorithm is as follows. In the beginning, when no one is allocated any item, EQ1 is satisfied vacuously. We argue that if the allocation is EQ1 before the assignment of an item, it remains EQ1 after that assignment as well. Let $i$ be a poor agent ($v_i(A_i) \leq v_j(A_j) ~\forall~ j \neq i$) at iteration $t$. Let $g$ be the good most valued by $i$ among the remaining goods in $O^+$, which is added to $A_i$ in the iteration $t+1$. Now either $i$ continues to be a poor agent, in which case, the allocation continues to be EQ1. Otherwise, consider an agent $j$ such that $v_i(A_i \cup \{g\}) > v_j(A_j)$. Then, since $g$ was the last added good in $A_i$, it is also the least favorite item of $i$ in its entire bundle $A_i$ (note that nothing has been allocated from $O^-$ till this iteration). This implies that $v_i(A_i \cup \{g\} \setminus g) = v_i(A_i) < v_j(A_j)$. Hence, any other agent violates EQ with respect to agent $i$ only up to the recent addition of $g$ into $A_i$. Since the empty allocation in the beginning is vacuously EQ1, this settles the claim that the allocation is EQ1, until $O^+$ is exhausted. Now, consider when everything from $O^+$ has been allocated. The instance now reduces to the one with only chores. Suppose agent $i$ is a rich agent at this point, that is, $v_i(A_i) \geq v_j(A_j) ~\forall~j \in [n]$. The algorithm picks $i$ and allocates it a chore $c$ that it dislikes the most. If $v_i(A_i \cup c) \geq v_j(A_j)~\forall~j \in [n]$, then the allocation continues to be EQ1. Else, $v_i(A_i \cup c) < v_j(A_j)$ for some agent $j$. Since $i$ was a rich agent previously and hence became a potential recipient of $c$, therefore, $i$ can choose to hypothetically remove $c$ from his bundle in order to value its own bundle more than $j$. That is, $v_i(A_i \cup c \setminus \{c\}) = v_i(A_i) \geq v_j(A_j)$. This implies that the allocation remains EQ1 after the allocation of $c$. This settles the claim.
\end{proof}

\thmEQoneww*

We show that \Cref{alg:EQ1_n} returns an EQ1 allocation when the instance has $\{-1, 1\}$ normalized valuations. First, note that since the instance is normalized, the reduced instance restricted to items in only $O^\pm$ is also normalized. That is, every agent assigns a value of $1$ to exactly $k_1$ items from $O^\pm$ and a value of $-1$ to exactly $k_2$ items from $O^\pm$, where $k_1$ and $k_2$ are constants. The idea is to first allocate all the items in $O^\pm$ in an almost equitable way and then to extend the partial EQ1 allocation to a complete allocation using \Cref{lem:completion}. To that end, we first show that there exists a partial EQ1 allocation that allocates all items in $O^\pm$.

\begin{lemma}
\label{claim:EQ1forOpm} There exists a partial EQ1 allocation restricted to the items in $O^\pm$ such that it exhausts $O^\pm$. Also, such an allocation can be computed in polynomial time.
\end{lemma}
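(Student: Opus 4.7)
My plan is to prove the claim by running \Cref{alg:EQ1_n} up to the invocation of the completion step and showing: (i) the partial allocation is EQ1 after every greedy assignment and every transfer, and (ii) the outer while-loop terminates with $O^\pm = \emptyset$ in polynomially many steps. Because valuations lie in $\{-1,1\}$, EQ1 and EQX coincide on partial allocations, so (i) will in fact deliver the stronger EQX guarantee needed for \Cref{thm:EQ1n}.

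Verifying the invariant on the greedy passes is routine: handing a poor agent an item she values $+1$ is hypothetically undone by removing exactly that item, preserving the EQ1 witnesses of the previous step, and the symmetric statement holds for handing a rich agent a $-1$-item. Each of the four transfers is designed so that the sign pattern at both endpoints remains compatible with EQ1: R-P and P-R strictly shrink the rich-poor gap, while R-R and P-P each shift both endpoints by the same amount within a single class. I would track the lexicographic potential $\Phi = (|O^\pm \text{ unallocated}|,\ |P|)$ and check that every step strictly decreases $\Phi$, bounding the total number of iterations polynomially.

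The main obstacle is the final branch of \Cref{alg:EQ1_n}: I must show that the state ``$O^\pm \neq \emptyset$ and no transfer is feasible and $|P| \leq |R|$'' never arises, so that the algorithm never returns without exhausting $O^\pm$. I would assume for contradiction that such a state is reached. Since no greedy step applies, every unallocated $o \in O^\pm$ is a chore for every $p \in P$ and a good for every $r \in R$; and the infeasibility of the four transfers imposes tight sign constraints on the items currently held by agents in $P$ and $R$ as seen from each other's valuations. At this point I would invoke $\{-1,1\}$-normalization, which guarantees that every agent's total count of $+1$-items in $O^\pm$ is a fixed constant $k_1$. Summing this count across $P$ and across $R$, partitioning each sum into allocated versus unallocated portions using the sign constraints imposed by the absence of transfers, and comparing the two sides should yield $|P| > |R|$, contradicting the hypothesis. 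This counting step is where I expect the delicate work to lie. Once it is discharged, the algorithm's $|P|>|R|$ branch applies, and because the EQ1 gap under $\{-1,1\}$ can only be one unit, a straightforward verification shows that after the branch at least one untouched poor agent joins the rich class and becomes a valid recipient of an unallocated $o \in O^\pm$, so $\Phi$ strictly decreases and progress resumes. Combining the invariant, the potential, and the counting argument then completes the proof, and feeding the resulting partial allocation into \Cref{lem:completion} yields \Cref{thm:EQ1n}.
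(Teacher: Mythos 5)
Your overall architecture matches the paper's proof exactly: run the greedy passes, interleave the four transfers, handle the stuck state by splitting on $|P|>|R|$ versus $|P|\le|R|$, and use $\{-1,1\}$-normalization to rule out the latter. The invariant checks on the greedy steps and transfers, and the analysis of the $|P|>|R|$ branch (the $|R|$ disjoint rich--poor reallocations leaving an untouched poor agent who becomes rich and absorbs an unallocated chore), are all faithful to the paper and correct.

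The genuine gap is that you have deferred precisely the step that carries the whole lemma: the contradiction in the stuck state with $|P|\le|R|$. You propose to count the $+1$-valued items of a rich agent and of a poor agent, split each count into allocated and unallocated portions, ``compare the two sides,'' and conclude $|P|>|R|$. That comparison alone does not close the case. The easy part gives that each rich agent values at $1$ at least $|R|(k+1)$ allocated goods plus all unallocated items, while a poor agent values all of those at $-1$; so by normalization a poor agent must find at least $|R|(k+1)+1$ many $+1$-valued items elsewhere among the \emph{allocated} items. But the only places left are the goods in poor bundles (at most $|P|k$ of them) and items sitting inside bundles as canceling good--chore pairs, and accounting for those requires the paper's machinery of \emph{contributing} versus \emph{non-contributing} pairs: each non-contributing pair $(g,c)\in A_r$ either pushes the poor agent further behind or forces a matching between the chore $c$ and a $+1$-item for $p$, and chasing this matching through the poor bundles (using the absence of P-P and R-P transfers) terminates only at an item valued $-1$ by every agent --- contradicting its membership in $O^\pm$. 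Note the paper's contradiction is not ``$|P|>|R|$'' but the existence of an objective chore inside $O^\pm$; a direct two-sided sum over $P$ and $R$ does not obviously yield either conclusion. Separately, your potential $\Phi=(|O^\pm\text{ unallocated}|,|P|)$ does not strictly decrease at a transfer step (an R-R transfer increases $|P|$ without allocating anything); you need to measure progress over a transfer \emph{plus} the subsequent greedy assignment, which is how the paper argues termination.
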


We first describe the algorithm. We iteratively pick a poor agent and allocate it an item it values at $1$ (a good) from $O^\pm$. If there are no such items in $O^\pm$, then we iteratively pick one of the rich agents and allocate it an item it values at $-1$ (a chore) from $O^\pm$. At this point, it is easy to see that the partial allocation is EQ1. We denote the set of agents with minimum utility as poor agents $(P)$ and those with maximum utility as rich agents ($R$). Now suppose that the remaining items in $O^\pm$ are all chores for the poor agents and goods for the rich agents, and if they are allocated to any of them, they increase the amount of inequity and hence may violate EQ1. Therefore, in order to move towards an EQ1 allocation that allocates all the items in $O^\pm$, we now aim to convert a poor (rich) agent to rich (poor) by re-allocating one of the allocated items, so that the converted agent can now be a potential owner of one of the remaining unallocated items while maintaining EQ1. We do so by executing one of the following transfers at a time, each of which allows us to allocate at least one unallocated item while respecting EQ1.

\begin{enumerate} 
    \item Rich-Rich Transfer (R-R). If $\exists~ r, r' \in R ~\text{and}~ o \in A_r: v_r(o)=1 ~\text{and}~v_{r'}(o) = -1$, transfer $o$ from $A_r$ to $A_{r'}$. This makes both $r$ and $r'$ poor agents and potential owners of $o \in U$.

    \item Rich-Poor Transfer (R-P). If $\exists~ r\in R, p \in P~\text{and}~ o \in A_r: v_r(o)=1=v_p(w)$, transfer $o$ from $A_r$ to $A_{p}$. This makes $r$ a poor agent and $p$ a rich agent. Consequently, both of them become a potential owner of $o \in U$.

    \item Poor-Rich Transfer (P-R). If $\exists~ r\in R, p \in P~\text{and}~ o \in A_p: v_p(o)=-1=v_r(o)$, transfer $o$ from $A_p$ to $A_r$. This makes $r$ a poor agent and $p$ a rich agent. Consequently, both of them become a potential owner of $o \in U$.

    \item Poor-Poor Transfer (P-P). If $\exists~ p, p'\in P~\text{and}~ o \in A_p: v_p(o)=-1 ~\&~ v_{p'}(o)=1$, transfer $o$ from $A_p$ to $A_{p'}$. This makes both $p$ and $p'$ rich agents and potential owners of $o \in U$.
    \end{enumerate}

    We execute one of the transfers at a time, thereby converting at least one poor (rich) agent to rich (poor). Now, the remaining item $o$, which was earlier a chore for all the poor agents and a good for all the rich agents, has a potential owner, and the algorithm makes progress by allocating $o$ to that agent. This continues until we can find a poor agent who can be converted to a rich one by such a transfer and can be allocated one of the remaining items, or until $O^\pm$ is exhausted, in which case we are done. 
    
    Now suppose no such transfers are feasible and $U \neq \emptyset$. At this point, we have the following valuations.

    \begin{enumerate}
        \item Every $r$ values every good in $A_{r'}$ at $1$.
        \item Every $p$ values every good in $A_r$ at $-1$.
        \item Every $r$ values every chore in $A_p$ at $1$
        \item Every $p$ values every chore in $A_{p'}$ at $-1$.
    \end{enumerate}

With these valuations at hand, we are now ready to prove \Cref{claim:EQ1forOpm}.

\begin{proof}[Proof of \Cref{claim:EQ1forOpm}]
We start with the greedy allocation, executing the
transfers one by one, allocating at least one unassigned item after every transfer, and then argue that until an unassigned good remains, one of the transfers can be executed. If not, then we leverage normalization to arrive at a contradiction. To that end, we consider the following cases.
\begin{enumerate} 
    \item \label{item:Case2a} If $|P|>|R|$, then we take a good ($1$-valued item) from $A_r$ and allocate it to $p$ who necessarily values it at $-1$ (else, there is an R-P transfer). We do this for $|R|$ disjoint pairs of a rich and a poor agent, thereby decreasing the utility of all these pairs by $-1$. But since $|P|>|R|$, we have a poor agent $p'$ whose value remains intact, which in turn, makes him one of the rich agents after the above transfers. Now, the remaining unallocated item, which was a chore for $p'$ can be allocated to it without violating EQ1. 
    
    \item \label{item:EgEc} Suppose $|P| \leq |R|$. We will argue that this assumption leads to a contradiction to normalization, thereby settling the claim.
    
    Suppose under the partial EQ1 allocation $A$, we have $v_r({A_r}) = k+1$ and $v_p({A_p}) = k$ for some constant $k$. 

    Since every rich agent $r$ values every good in the bundle $A_{r'}$ of any other rich agent $r'$ at $1$, we have that there are at least $|R|(k+1)$ allocated items that a rich agent $r$ values at $1$. Additionally, $r$ assigns a value of $1$ to each of the remaining unallocated items in $O^\pm$, meaning there are at least $|R|(k+1)+1$ items in total that $r$ values at $1$. Since all the unallocated items in $O^\pm$ are considered chores by agent $p$, it follows—by normalization—that at least $|R|(k+1)+1$ of the \emph{allocated} items must be valued at $1$ by $p$. 

    Even if $p$ values every good in every other poor agent's bundle at $1$, then also only $|P|k$ out of $|R|(k+1)+1$ such items are accounted for. (Note that $p$ can not value a chore in $A_{p'}$ at $1$, else there exists a P-P transfer). Also, note that $p$ cannot value any of the $|R|(k+1)$ items allocated to the rich agents at $1$, else there is an R-P transfer. This implies that there must be at least $(|R|-|P|)k + |R| +1$ many \emph{allocated} items valued at $1$ by $p$, outside of the $|R|(k+1)$ and $|P|k$ items allocated as goods to the rich and poor agents respectively. At this point, we say that $r$ stays ahead of $p$ in terms of the number of items valued as a good. 
    
    Since $v_r(A_r) = k+1$, in particular, $r$ derives its value of $k+1$ from a set of $k+1$ items valued at $1$ by itself. We call these $k+1$ items as \emph{contributing} goods in the bundle $A_r$. There can be additional items in $A_r$, however, they do not contribute to the utility of $r$. Any such additional good $g$ (chore $c$) in $A_r$ is \emph{non-contributing}, that is, it can be paired up with a chore $c$ (good $g$) in $A_r$ so that the final utility contribution of $g \cup c$ to $r$ is $0$. We refer to these pairs as non-contributing good-chore pair $(g, c) \in A_r$. Likewise, for any poor agent $p$, there are $k$ contributing goods in $A_p$, and the rest of the goods (chores) in $A_p$ can be paired up with a chore (good) in $A_p$ so their cumulative utility contribution is $0$.

    A non-contributing good-chore pair $(g, c) \in A_r$ can be interpreted by any poor agent $p$ either as a chore-good $(c', g')$ pair or as a chore-chore \((c', c'')\) pair. (Note that the good \(g\) in \((g, c)\) cannot be a good for any poor agent, else there exists an R-P transfer.) In the case where it is interpreted as $(c', c'')$, the poor agent $p$ falls even further behind agent $r$ in terms of the number of goods. The agent $r$ stays ahead of the poor agents by this additional $1$-valued item $g$. In the former case, we map the item $g$ in $(g, c) \in A_r$ to the item $c$, which is valued as a good $g'$ by $p$. We associate the item $g$ in $(g, c) \in A_r$ with the item $c$, which is valued as a good $g'$ by the poor agent $p$. For every such $g'$ valued as a good by a poor agent, the rich agent contains $g$ in its bundle. Consequently, the non-contributing pairs in a rich bundle are insufficient to compensate for the deficit in the number of items valued as goods by a poor agent.

    For $p$ to compensate for this deficit, it must value at least $(|R|-|P|)k+|R| +1$ many \emph{allocated} items, outside of the non-contributing pairs in any rich bundle, at $1$. Let's call this set of goods for $p$ as $E_g^p$. Since we have counted for the contributing goods in any poor bundle (at most $|P|k$ of them), for any $g \in E_g^p$, we must have that it is allocated to a poor agent $p'$ as a non-contributing pair. (Again, the non-contributing pairs allocated to $r$ are already counted for). Let $c$ be the corresponding non-contributing chore in $A_{p'}$. If $v_p(c) = 1$, then we have a P-P transfer. So, for all $p \in P$, we have $v_p(c) = -1$. Now, if any rich agent values $c$ at $1$, then it again stays ahead of the poor agents by this additional $1$ valued item (which is outside its bundle), and the process of compensating for the deficit does not terminate. Otherwise, $v_r(c) = -1$, but this is a contradiction to the fact that $c \in O^\pm$.
 \end{enumerate}
 Therefore, starting with a greedy allocation, executing the transfers one by one, allocating at least one unassigned item after every transfer, the algorithm terminates at an EQ1 allocation for the items in $O^\pm$. This settles the claim.
 \end{proof}

\begin{table*}
\centering
\begin{tabular}{c|cccccccccccccccc|}
\multicolumn{1}{l|}{} & \multicolumn{16}{c|}{$O^\pm$}                                                                                                                                                                                                                                                                                                                                                                                                                                                                                                                                   \\ \hline
\multicolumn{1}{l|}{} & \multicolumn{10}{c}{\begin{tabular}[c]{@{}c@{}}Allocated \\ Items\end{tabular}}                                                                                                                                                                                                                                                                            & \multicolumn{1}{l}{}           & \multicolumn{1}{l}{}     & \multicolumn{1}{l}{}           & \multicolumn{1}{l|}{}     & \multicolumn{2}{c|}{\begin{tabular}[c]{@{}c@{}}Unallocated\\ Items\end{tabular}} \\ \hline
                      & $o_1$                          & $o_2$                          & $o_3$                          & $o_4$                          & $o_5$                          & $o_6$                          & $o_7$                          & $o_8$                          & $o_9$                          & \multicolumn{1}{c|}{$o_{10}$}                         & \multicolumn{4}{c|}{\begin{tabular}[c]{@{}c@{}}Non-Contributing \\ Items\end{tabular}}                                                                 & $o$                                     & $o$                                    \\ \hline
$r_1$                 & \circled{$1$} & \circled{$1$} & \circled{$1$} & {\color{gray!50}{$1$}}                            & {\color{gray!50}{$1$}}                            & {\color{gray!50}{$1$}}                            &                                &                                &                               & \multicolumn{1}{c|}{}                               &         $\boxed{1,-1}$                       & \multicolumn{1}{c|}{}    & $-1$,{\color{red}{$-1$}}                          & \multicolumn{1}{c|}{} & $1$                                     & $1$                                    \\
$r_2$                 & {\color{gray!50}{$1$}}                            & {\color{gray!50}{$1$}}                            & {\color{gray!50}{$1$}}                          & \circled{$1$} & \circled{$1$} & \circled{$1$} &                                &                                &                                & \multicolumn{1}{c|}{}                               &                                & \multicolumn{1}{c|}{}    & $-1$,{\color{red}{$-1$}}                            & \multicolumn{1}{c|}{} & $1$                                     & $1$                                    \\
$p_1$                 & {\color{gray!50}{$-1$}}                           & {\color{gray!50}{$-1$}}                           & {\color{gray!50}{$-1$}}                           & {\color{gray!50}{$-1$}}                           & {\color{gray!50}{$-1$}}                          & {\color{gray!50}{$-1$}}                          & \circled{$1$} & \circled{$1$} & $1$                            & \multicolumn{1}{c|}{$1$}                            & {\color{gray!50}{$-1,\star$}} & \multicolumn{1}{c|}{} & $1$, {\color{red}{$-1$}} & \multicolumn{1}{c|}{} & $-1$                                    & $-1$                                   \\
$p_2$                 & {\color{gray!50}{$-1$}}                           & {\color{gray!50}{$-1$}}                          & {\color{gray!50}{$-1$}}                          & {\color{gray!50}{$-1$}}                           & {\color{gray!50}{$-1$}}                           & {\color{gray!50}{$-1$}}                          & $1$                            & $1$                            & \circled{$1$} & \multicolumn{1}{c|}{\circled{$1$}} & {\color{gray!50}{$-1,\star$}}                           & \multicolumn{1}{c|}{} & \boxed{$1$,\circled{{\color{red}{$-1$}}}}                          & \multicolumn{1}{c|}{} & $-1$                                    & $-1$                                  
\end{tabular}
\vspace{0.4cm}
\caption{A schematic of Case (2b) in the proof of \Cref{claim:EQ1forOpm}. $r_1$ and $r_2$ are two rich agents with utility $3$ each and $p_1$ and $p_2$ are two poor agents with utility $2$ each. The values in gray are forced, otherwise, there is a feasible transfer. The values in red depict the contradiction that there is an item $c$ such that $c \notin O^\pm$}
\vspace{0.4cm}
\label{tab:EQ1cc}
\end{table*}

We are now ready to prove \Cref{thm:EQ1n}.

\begin{proof}[Proof of \Cref{thm:EQ1n}]
Lemma \ref{claim:EQ1forOpm} gives us a partial EQ1 allocation $A$. 
Now we can allocate the items from $O^+$ to a poor agent iteratively and once $O^+$ is exhausted, we allocate the item from $O^-$ to a rich agent iteratively (\Cref{lem:completion}).
This ensures that $A$ is EQ1 at every step of the algorithm. Consequently, the complete allocation $A$ is EQ1. 
\end{proof}

\thmEQonewow*

The idea is similar to \Cref{thm:EQ1n} except here, we do not aim to exhaust $O^\pm$ first, instead, allocate the items from the entire bundle. 
This is so because a partial allocation restricted to $O^\pm$ may not satisfy EQ1 (see \Cref{ex:nopartialEQ1}).

We give a brief algorithm description which is along the lines of \Cref{alg:EQ1_n}, except for the transfers, which are more involved in this setting.
We iteratively pick a poor agent $p$ and allocate it an item valued at $1$ or $0$. If there is no such item for $p$, then we iteratively pick one of the rich agents and allocate it an item valued at $-1$ or $0$. At this point, it is easy to see that the partial allocation $A$ is EQ1. As previously, we denote the set of agents with minimum utility under $A$ as poor agents $(P)$ and those with maximum utility under $A$ as rich agents ($R$). Either we arrive at a complete EQ1 allocation by continuing in the above manner, or there are unassigned items which are chores for agents in $P$ and goods for agents in $R$. Allocating any of the remaining items only increases the inequity and hence violates EQ1. Therefore, in order to move towards a complete EQ1 allocation, we now aim to convert a poor (rich) agent to rich (poor) by re-allocating one of the allocated items, so that the converted agent can now be a potential owner of one of the remaining unallocated items while maintaining EQ1. We do so by executing one of the following transfers at a time, after which at least one of them becomes a potential owner of $o \in U$.

\begin{enumerate}
  \item Rich-Rich Transfer (R-R). If $\exists~ r, r' \in R ~\text{and}~ o \in A_r: (v_r(o), v_{r'}(o)) \in \{(1, 0), (0, -1), (1, -1)\}$, transfer $o$ from $A_r$ to $A_{r'}$. 

  \item Rich-Poor Transfer (R-P). If $\exists~ r\in R, p \in P~\text{and}~ o \in A_r: (v_r(o), v_{p}(o)) \in \{(1, 1), (1, 0), (0, 1)\}$, transfer $o$ from $A_r$ to $A_p$. 

    \item Poor-Rich Transfer (P-R). If $\exists~ r\in R, p \in P~\text{and}~ o \in A_p: (v_p(o), v_r(o)) \in \{(-1, 0), (-1, -1), (0, -1)\}$, transfer $o$ from $A_p$ to $A_r$. 

    \item Poor-Poor Transfer (P-P). If $\exists~ p, p'\in P~\text{and}~ o \in A_p: (v_p(o), v_{p'}(o)) \in \{(-1, 1),(0, 1), (-1, 0)\}$, transfer $o$ from $A_p$ to $A_{p'}$. 
\end{enumerate}

Note that we execute one of the transfers at a time, thereby converting at least one poor (rich) agent to rich (poor). Now, the remaining item $o$, which was earlier a chore for all the poor agents and a good for all the rich agents, has a potential owner, and the algorithm makes progress by allocating $o$ to that agent. This continues until we can find a poor agent who can be converted to a rich one by such a transfer and can be allocated one of the remaining items. 
    
Now suppose no such transfers are feasible and $U \neq \emptyset$. At this point, we have the following valuations.

    \begin{enumerate}
        \item Every $r$ values every good in $A_{r'}$ at $1$.
        \item Every $p$ values every good in $A_r$ at $-1$.
        \item Every $r$ values every chore in $A_p$ at $1$
        \item Every $p$ values every chore in $A_{p'}$ at $-1$.
    \end{enumerate}

\begin{proof}[Proof of \Cref{thm:EQnzero}]
With the above set of transfers, it is easy to verify that the analogous argument as in \Cref{thm:EQ1n} holds true in this setting as well. Starting with a greedy allocation, executing the transfers one by one, allocating at least one unassigned item after every transfer, the algorithm terminates at an EQ1 allocation for all the items. This settles the claim.
\end{proof}



\section{Omitted Details of \Cref{sec:EQ1$+$PO}}


\begin{algorithm}[!ht]
\caption{EQ1$+$PO, n=2 and $\{-1, 0, 1\}$ Type-Normalized Valuations}
\label{alg:eq1po2}
\textbf{Input:} An instance with $2$ agents, $m$ items and $\{1, 0, -1\}$ type-normalized valuations. \\
\textbf{Output:} An EQ1$+$PO allocation $A$.
\begin{algorithmic}[1]
\STATE $A \leftarrow$ An empty allocation 
\WHILE{$O^\pm \neq \emptyset$}
\STATE $A_i = A_i \cup \{o\}$, where $o \in O^\pm$ such that $v_i(o) = 1$ 
\ENDWHILE
\IF{$\exists~ o \in O^+$ such that $v_i(o) = 0$ for some $i$}
\STATE $A(j) = A(j) \cup \{o\}$ such that $j \neq i$
\ENDIF
\WHILE{$O^+ \neq \emptyset$}
\STATE $i \gets$ poor agent
\STATE $A_i = A_i \cup \{o\}$ where  $o$ is the most valuable good from $O^+$ for $i$
\ENDWHILE
\IF{$\exists~ o \in O^-$ such that $v_i(o) = 0$ for some $i$}
\STATE $A_i = A_i \cup \{o\}$
\ENDIF
\WHILE{$O^- \neq \emptyset$}
\STATE $i \gets$ rich agent
\STATE $A_i = A_i \cup \{o\}$ where $o$ is the most disliked chore from $O^-$ for $i$
\ENDWHILE
\RETURN $A$
\end{algorithmic}
\end{algorithm}

\EQonePOwow*

\begin{proof}{(continued)} We show here that if $A$ is not EQ1, then there is no EQ1$+$PO allocation for the instance. 
If the EQ1 violators in $A^\prime$ (the Nash allocation on the reduced instance $\mathcal{I_G}$) cannot be resolved by using the remaining chores, then there is no EQ1$+$PO allocation. We argue this claim by contradiction. Suppose there is a complete EQ1$+$PO allocation, say $A^\star$. Let $t$ be the number of items in $O^-$, and hence valued at $-1$ by all the agents (they are `universal' chores). Under the Nash optimal partial allocation $A^\prime$, let $v_p = \min_i v_i(A^\prime_i)$ be the utility of the poorest agent and denote the set of all such agents as $P$, let the agents with utility $v_r = v_p + 1$ be called the rich agents, denoted as $R$, and the remaining agents with utility strictly greater than $v_r$ be called the EQ1 violators, denoted by $S$. It is easy to see that if $\sum_{s \in S} (v_s - v_r) \leq t$, then the $t$ many $-1$ valued chores
from $O^-$ could have been used to bring down the utility of all the violators to $v_r$, ensuring that the completion of  $A^\prime$ into $A$ is indeed EQ1. Therefore, we assume 

\begin{align}
\label{eq:eqNashnotEQ1}\sum_{s \in S} (v_s - v_r) >  t\end{align}

Now consider the EQ1$+$PO allocation $A^\star$. Let $\overline{A^\star}$ denote the restriction of $A^\star$ to $O^\pm \cup O^+$. Then, $\overline{A^\star}$ must be a PO allocation but not EQ1, otherwise the Nash optimal allocation must have been EQ1 (\Cref{lem:NashEQ1PO}). Let $v_p^\star,~ v_r^\star,~ v_s^\star$ denote the utilities of the poor agents $P$, rich agents $R$, and violators $S^\star$ in $\overline{A^\star}$. It must be the case that 
\begin{align} \label{eq:eqNashnotEQ1two} \sum_{s \in S^\star} (v_s^\star - v_r^\star) \leq t\end{align}
since the completion $A^\star$ is an EQ1 allocation. Also, notice that since both $\overline{A^\star}$ and $A^\prime$ are Pareto optimal, they both allocate the items in $O^\pm \cup O^+$ non-wastefully. Therefore, the sum of the utilities under both the partial allocations is exactly $m^\prime$, where $m^\prime:= |O^\pm \cup O^+|$. Therefore, 

\begin{align}\sum_{p \in P, ~r \in R, ~s \in S} (v_p + v_r + v_s) = \sum_{p \in P^*, ~r \in R^*, ~s \in S^*} (v_p^\star + v_r^\star + v_s^\star) = m' \end{align}
 
 Since all the $v_i$'s and $v_i^\star$'s are non-negative (because of non-wasteful allocation of items in $O^\pm \cup O^+$ under both $A'$ and $\overline{A^\star}$), by AM-GM inequality\footnote{For any $n$ non-negative numbers $\{v_1, \ldots v_n\}$ arithmetic Mean is greater than the geometric mean (that is, $\frac{1}{n}\sum_{i \in [n]}{v_i} \geq (\prod_{i \in [n]} v_i)^\frac{1}{n}$)}, it follows that the product of a set of numbers with a constant sum has the higher outcome if they are equal/closer to each other rather than being further away ($\sum_{s \in S^\star} (v_s^\star - v_r^\star) \leq t < \sum_{s \in S} (v_s - v_r)$ ensures that $v_r^\star$ and $v_s^\star$ are closer to each other than $v_s$ and $v_r$ (Equations \ref{eq:eqNashnotEQ1} and  \ref{eq:eqNashnotEQ1two})). Therefore,

 \begin{align} NW(A^\prime)=\left(\prod_{p \in P, r \in R, s \in S}~v_p v_r v_s\right)^\frac{1}{n} < \left( \prod_{p \in P^*, r \in R^*, s \in S^*}~v_p^\star  v_r^\star v_s^\star \right)^\frac{1}{n} =  NW(\overline{A^\star})  \end{align}
 Hence, $A'$ is not a Nash optimal allocation. This settles the claim.
\end{proof}

\eqonepowowtype*

\begin{proof} We will show that the allocation $A$ as returned by the \Cref{alg:eq1po2} is EQ1$+$PO. Notice that $A$ allocates all items from $O^\pm \cup O^+$ non-wastefully. Also, $o \in O^-$ is allocated to an agent who values it at $0$, otherwise, if both the agents value it at $-1$, then PO is satisfied irrespective of which agent ends up receiving that item.
 Therefore, $A$ satisfies PO. We now argue that it also satisfies EQ1. Suppose $i$ is a poor agent by the end of Step $6$. Say, the utility of $i$ at this point is $k_1 \cdot c$ and that of agent $j$ is $k_2 \cdot c$. Since $i$ is the poor agent, therefore, $k_1 < k_2$. Then, there must be at least $k_2-k_1$ items in $O^+$ valued at $\{1, 0\}$ by $\{i, j\}$ (by type-normalization). By construction, all these $k_2-k_1$ items are allocated to $i$ under $A$. This compensates for the inequity experienced by $i$ so far. Now the allocation is extended by iteratively allocating a poor agent its most valuable item, which maintains EQ1 till $O^+$ is exhausted. 
 For the items in $O^-$ which are valued at $-1$ by both agents, the rich agent gets that item, which thereby maintains EQ1. For the remaining chores in $O^-$, at least one of the agents values them at $0$ and ends up receiving the same. This does not violate EQ1 as it does not change the utility of the agents. This settles the claim. 
\end{proof}

\section{Omitted Details of \Cref{sec:EQ1+welfare}}

\UWEQOneDP*

\begin{proof}{(continued)}
We argue here that every entry in the DP table is indeed computed correctly. To that end, we do an induction on the number of items allocated. For the base case, when one item is allocated to say, agent $i$, then only $i$ derives the value of $v_i(o_1)$, and the rest of the agents get a value $0$. Depending on whether $o_1$ is a good or a chore for agent $i$, either $g_i = o_1$ or $c_i = o_1$, and everything else is $\emptyset$. This is correctly captured in the base case. By induction hypothesis, suppose all the table entries that allocate the first $k-1$ items are computed correctly. Consider the allocation of $k^{th}$ item as captured by the table entry $(k, \mathbf{v}, \mathbf{g}, \mathbf{c})$. 

First consider the case when $g_i \neq o_k$ and $c_i \neq o_k ~\forall i \in [n]$ in the entry $(k, \mathbf{v}, \mathbf{g}, \mathbf{c})$. Suppose RHS of \Cref{eq:eq1} is True. Suppose $o_k$ is a good for some agent $i$ such that $v_i(g_i) \leq v_i(o_k)$ (the case of chores can be argued similarly) and RHS of \Cref{eq:eq1} is True for the index $i$. That is, $(k-1, v_1, \ldots, v_i - v_i(o_k), \ldots v_n; g_1,\ldots g_i, \ldots g_n; c_1, \ldots c_i, \ldots c_n)$ is True. It means that there is allocation of $k-1$ items such that everyone gets a utility of $(v_1 \ldots v_i - v_i{(o_k)}, \ldots v_n)$ and $(g_1 \ldots g_n)$ and $(c_1, \ldots c_n)$ are the least valued goods and the least disliked chores in the respective bundles.  Then allocating $o_k$ to $i^{th}$ agent gives an allocation with utilities $(v_1, \ldots v_i, \ldots v_n)$ such that the set of the least valued goods and least disliked chores remain the same for all the agents (because $v_i(g_i) \leq v_i(o_k)$). Therefore, LHS of \Cref{eq:eq1} is True.

On the other hand, suppose the LHS of \Cref{eq:eq1} is True. Suppose $o_k$ is a good and belongs to agent $i's$ bundle. Then, since $g_i \neq o_k$, it means that $g_i$ is the least valued item in agent $i's$ bundle. Consider the allocation after removing $o_k$ from $i's$ bundle. Then, it corresponds to an allocation of $k-1$ items such that each agent gets a utility of $(v_1, \ldots v_i - v_i{(o_k)}, \ldots v_n)$ such that $(g_1, \ldots g_n)$ and $(c_1, \ldots c_n)$ are the least valued goods and the least valued chores in the respective bundles. Therefore, $(k, v_1, \ldots v_i - v_i{(o_k)}, \ldots v_n, g_1, \ldots g_n, c_1, \ldots c_n )$ is True and hence RHS of \Cref{eq:eq1} is True.

If $o_k$ is a good for agent $i$ such that $v_i(g_i) > v_i(o_k)$, RHS of \Cref{eq:eq1} is False by definition, and LHS of \Cref{eq:eq1} is False since $g_i$ was the least valuable item in $i's$ bundle but $g_i \neq o_k$.

Now consider the case when $g_i = o_k$ for some $i \in [n]$ in the table entry $(k, \mathbf{v}, \mathbf{g}, \mathbf{c})$. This means that $(k, \mathbf{v}, \mathbf{g}, \mathbf{c})$ corresponds to a state where $o_k$ is allocated to $i$ who considers it to be a good. Now suppose RHS of \Cref{eq:eq2} is True. Then, adding $o_k$ to $i's$ bundle increases its utility to $v_i$ from $v_i - v_i(o_k)$ and since $v_i(o_k) < v_i(g)$, $o_k$ is the new least valued item in $i's$ bundle, which is captured in the LHS and hence LHS is True. On the other hand, if LHS of \Cref{eq:eq2} is True, then consider the allocation post removing the item $o_k$ from $i's$ bundle. Then, RHS of \Cref{eq:eq2} is True for a least valued item $g$ in $i's$ bundle from $\{o_1, \ldots o_{k-1}\}$. The case when $c_i = o_k$ for some $i \in [n]$ is argued similarly. This settles the claim that the table entries are correctly computed at every step.

The states $(m,\mathbf{v}, \mathbf{g}, \mathbf{c})$ correspond to the final allocation. An allocation corresponding to one of the final states is EQX if and only if $v_i \geq v_j - v_j(g_j)$ and $v_i - v_i(c_i) \geq v_j$ for every pair of agents $i, j \in [n]$. Among the states that correspond to EQX allocations, the algorithm selects the one that maximizes UW, that is, $\sum_{i \in [n]}{v_i}$ or EW, that is, $\min_{i \in [n]}{v_i}$.

The total number of possible states is $m^{2n+1} \cdot V^n$, where $V= V_g + V_c$. Computing one state requires at most $n$ many look-ups of previously computed states. Therefore, the run time of the algorithm is $O(n \cdot m^{2n+1} \cdot V^n)$. Note that when valuations are $\{1,0, -1\}$, $V$ is a constant, hence, for a fixed number of agents, the algorithm runs in polynomial time for this case.
\end{proof}


\section{EF+EQ+PO Allocations}
\label{sec:EF+EQ}

In this section, we discuss the compatibility of equitability with envy-freeness, defined as follows.

\paragraph{Envy-Freeness.} An allocation $A$ is said to be \emph{envy-free} (EF) if for any pair of agents $i,j \in [n]$, we have $v_i(A_i) \geqslant v_i(A_j)$. It is said to be \emph{envy-free up to one item} (EF1) if for any pair of agents $i$ and $j$ such that $v_i(A_i) < v_j(A_j)$, either there is a good $g \in A_j$ such that $v_i(A_i) \geq v_i(A_j \setminus g)$ or there is a chore $c$ such that $v_i(A_i \setminus c) \geq v_i(A_j)$. Further, it is EFX  we have $v_i(A_i) \geq v_i(A_j \setminus \{g\})$ for all goods $g$ in $A_j$, and $v_i(A_i \setminus \{c\}) \geq v_i(A_j)$ for all chores $c$ in $A_i$. 


\begin{proposition}
\label{prop:EQ+POisEF}
    For $\{-1, 0, 1\}$-valuations, an EQ+PO allocation is also envy-free (EF).

\end{proposition}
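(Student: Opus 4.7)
\textbf{Proof plan for \Cref{prop:EQ+POisEF}.} The plan is to argue by contradiction via a one-item transfer that Pareto-dominates the purported EQ$+$PO allocation. Concretely, suppose $A$ is EQ and PO under $\{-1,0,1\}$ valuations, and assume toward contradiction that some agent $i$ envies some agent $j$, i.e.\ $v_i(A_i) < v_i(A_j)$.

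First I would combine envy with equitability: by EQ we have $v_i(A_i)=v_j(A_j)$, so the envy assumption rewrites as $v_j(A_j) < v_i(A_j)$, i.e.\ $\sum_{o \in A_j}\bigl(v_i(o)-v_j(o)\bigr) > 0$. In particular, there must exist an item $o^\star \in A_j$ with $v_i(o^\star) > v_j(o^\star)$. Since all values lie in $\{-1,0,1\}$, the only possibilities for the pair $(v_i(o^\star),v_j(o^\star))$ are $(1,0)$, $(1,-1)$, or $(0,-1)$; in every case $v_i(o^\star)\geq 0$ and $v_j(o^\star)\leq 0$, with at least one of the two inequalities strict.

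Next I would exhibit the Pareto improvement: move $o^\star$ from $A_j$ to $A_i$ to form $A'$. Then $v_i(A'_i)-v_i(A_i) = v_i(o^\star) \geq 0$ and $v_j(A'_j)-v_j(A_j) = -v_j(o^\star) \geq 0$, while every other agent's bundle is unchanged. Because at least one of $v_i(o^\star) > 0$ or $-v_j(o^\star) > 0$ holds, the inequality is strict for at least one agent, so $A'$ Pareto-dominates $A$, contradicting the assumption that $A$ is PO. This settles envy-freeness.

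There is no real obstacle beyond a careful case check on the pair $(v_i(o^\star),v_j(o^\star))$; the key structural observation is that in the ternary domain any strict marginal gap $v_i(o^\star) > v_j(o^\star)$ automatically means that transferring $o^\star$ from $j$ to $i$ is weakly beneficial to both, which is precisely what turns the envy of $i$ into a Pareto violation. No additional machinery from earlier sections (e.g.\ \Cref{lem:completion} or the Nash-optimal construction) is needed.
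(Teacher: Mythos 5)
Your proof is correct and follows essentially the same route as the paper: both arguments locate an item $o^\star \in A_j$ with $v_i(o^\star) > v_j(o^\star)$ (forced by EQ plus the envy of $i$) and observe that in the ternary domain this makes transferring $o^\star$ from $j$ to $i$ a Pareto improvement, contradicting PO. Your version is slightly more streamlined in that it derives the existence of $o^\star$ directly from $\sum_{o\in A_j}\bigl(v_i(o)-v_j(o)\bigr)>0$ rather than first cataloguing which value profiles PO permits, but the underlying idea is identical.
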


\begin{proof} Suppose $A$ is an EQ+PO allocation for the given instance. EQ implies that we have $v_i(A_i) = v_i(A_j) = k \cdot c$ and PO ensures that if an agent receives an item $o$ with value $-1$, then everyone else values $o$ at $-1$. (Else, if there is an agent $i$ such that $v_i(o) = 0 ~or~c$, then allocating $o$ to $i$ is a Pareto improvement). Likewise, if an agent receives an item that it values at $0$, then everyone else values that item at either $0$ or $-1$, again for the same reason. Now suppose $A$ is not EF. Then, there is a pair of agents $i$ and $j$ such that $v_i(A_i) = k \cdot c$ but $v_i(A_j)> k \cdot c = v_j(A_j)$. This implies that there is an item $o$ in $j's$ bundle that is valued at $0$ (or $-1$) by $j$ but valued at $1$ (or $0$) by $i$. Allocating $o$ to $i$ is a Pareto improvement, which is a contradiction. Therefore, $A$ is EF.
\end{proof}

Since the allocation constructed in the proof of \Cref{thm:EQ1hard} is EF, therefore, we get the following result.

\begin{corollary}
\label{cor:EQ1+EFhard} Deciding whether an instance admits an allocation that is simultaneously EF+EQ1 or EF1+EQ1 is (weakly) NP-complete.
\end{corollary}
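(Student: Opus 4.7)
The plan is to directly reuse the reduction from $2$-partition used in the proof of Theorem \ref{thm:EQ1hard}. Since every EF allocation is EF1, and both EF+EQ1 and EF1+EQ1 inherit EQ1 as a necessary condition, it suffices to verify (i) that the allocation constructed in the forward direction of that reduction is in fact envy-free, and (ii) that the reverse direction of Theorem \ref{thm:EQ1hard}, which uses EQ1 alone, carries over unchanged.

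For the forward direction, I would take the partition $(S, U \setminus S)$ and form the allocation $A_{a_1}=S$, $A_{a_2}=U \setminus S$, $A_{a_3}=\{d_1, d_2\}$, $A_{a_4}=\{d_3, d_4\}$. Equitability (value $T$ for $a_1, a_2$ and $2T$ for $a_3, a_4$, resolvable up to one dummy item) has already been verified in Theorem \ref{thm:EQ1hard}. To certify EF, I would compute the cross-utilities: agents $a_1, a_2$ value every other set-bundle at exactly $T$ (their own value), and value each pair $\{d_i,d_j\}$ at $-6T \leq T$, so no envy arises; agents $a_3, a_4$ value any set-bundle at $0 \leq 2T$ and each other's dummy bundle at exactly $2T$, again with no envy. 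Hence the witness is simultaneously EF, EQ1, and (trivially) EF1+EQ1.

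For the reverse direction, I would simply observe that any EF+EQ1 or EF1+EQ1 allocation is, in particular, an EQ1 allocation. The case analysis in the reverse direction of Theorem \ref{thm:EQ1hard} uses only EQ1 to deduce that $a_1$ and $a_2$ between them receive at most two dummy items, and then that both pairs must split so that $a_1, a_2$ each derive utility exactly $T$ from the set items, corresponding to a valid $2$-partition. Membership in NP is immediate since EF, EF1, and EQ1 can all be verified in polynomial time, so (weak) NP-completeness of both EF+EQ1 and EF1+EQ1 follows. The only non-mechanical step is the explicit EF check in the forward direction, and that reduces to a handful of inequalities driven by the sharply contrasting valuations of set-items versus dummy items, so I expect no genuine obstacle.
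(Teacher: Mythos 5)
Your proposal is correct and matches the paper's argument exactly: the paper proves this corollary by the one-line observation that the witness allocation in the forward direction of Theorem~\ref{thm:EQ1hard} is envy-free, while the reverse direction needs only EQ1. Your explicit cross-utility computation just fills in the details the paper leaves implicit.
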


Since the allocation constructed in the proof of \Cref{thm:eq1pohard} is EF, therefore, we get the following result.

\begin{corollary}
\label{cor:EQ1+EF+POhard}
Deciding whether a type-normalized instance admits an allocation that is simultaneously EF+EQ1$+$PO or EF1+EQ1$+$PO is (strongly) NP-hard.
\end{corollary}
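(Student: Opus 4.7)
The plan is to derive this corollary directly from the reduction already built in the proof of Theorem \ref{thm:eq1pohard}, with no new hardness construction. The key observation is that the EQ1+PO allocation produced in the forward direction of that reduction is in fact envy-free (not merely EF1), and that any EF+EQ1+PO or EF1+EQ1+PO allocation is automatically EQ1+PO. Together these give the hardness transfer for both variants in one stroke.

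For the forward direction I would take the 3-Partition yes-instance and the associated allocation $A$ constructed in Theorem \ref{thm:eq1pohard}: each set-agent $a_i$ receives the triple $S_i$ (so $v_i(A(a_i))=T$), and the dummy agent $a_{r+1}$ receives $\{o_{3r+1},o_{3r+2}\}$. I would verify EF by a short case check on the three types of pairwise comparisons. First, $v_i(A(a_j))=T=v_i(A(a_i))$ for any two set-agents $i\neq j$ since set-agents have identical valuations on the set-items, so no set-agent envies another. Second, each set-agent values the dummy bundle at $v_i(o_{3r+1})+v_i(o_{3r+2})=-T-(r-3)T=-(r-2)T$, which is strictly smaller than $T$, so set-agents do not envy the dummy agent. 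Third, the dummy agent values its own bundle at $T+(r-1)T=rT$, while each set-agent's bundle is worth at most $0$ to the dummy agent (the set-items are valued at $0$ except $o_{3r}$ which contributes $-(r-2)T$), so $a_{r+1}$ does not envy anyone either. Hence $A$ is EF, and in particular EF1.

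For the reverse direction I would observe that any EF+EQ1+PO allocation (respectively EF1+EQ1+PO allocation) in the reduced instance is a fortiori an EQ1+PO allocation, so the reverse-direction argument from Theorem \ref{thm:eq1pohard} applies verbatim and yields a valid 3-partition of the input multiset. Because the reduced instance is type-normalized by construction (every agent values the full bundle at $2T$), this implication preserves the type-normalization restriction claimed in the corollary. Since 3-Partition is strongly NP-hard and the reduction remains polynomial (and pseudopolynomial-free) in the numerical encoding, deciding the existence of an EF+EQ1+PO allocation and deciding the existence of an EF1+EQ1+PO allocation are both strongly NP-hard on type-normalized instances.

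I do not anticipate any real obstacle; the only substantive step is the short EF verification in paragraph two, and everything else is a direct invocation of Theorem \ref{thm:eq1pohard}.
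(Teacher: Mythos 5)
Your proposal is correct and follows exactly the paper's route: the paper derives this corollary in one line by noting that the allocation built in the forward direction of the reduction for \Cref{thm:eq1pohard} is already EF, while the reverse direction is immediate because EF$+$EQ1$+$PO (or EF1$+$EQ1$+$PO) implies EQ1$+$PO. Your explicit case check of envy-freeness simply fills in the verification the paper leaves implicit.
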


Notice that for $\{1,0,-1\}$ valuations, a Pareto optimal allocation is EF1 if and only if it is EFX, and is
EQ1 if and only if it is EQX. Therefore, the above two results hold for
all combinations of X + Y + PO, where X $\in$ \{EFX, EF1\} and Y $\in$ \{EQX, EQ1\}. The allocation constructed in \Cref{lem:typen101} can be easily verified to be EF1, thereby confirming the following result.

\begin{corollary}
\label{cor:EQ1+EF1+POeasy}
For $\{1, 0,-1\}$ valuations, an EF1+EQ1$+$PO allocation can be computed in polynomial time, whenever such an allocation exists.
\end{corollary}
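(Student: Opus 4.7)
The plan is to show that the allocation $A$ returned by \Cref{alg:EQ1_PO_n}, already guaranteed to be EQ1$+$PO by \Cref{lem:typen101} (whenever one exists), is additionally EF1. I will argue EF1 by induction over the construction of $A$, tracking the Nash-based starting allocation and each chore addition in Steps 4 and 5.

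For the base case, I will invoke the classical result that Nash optimal allocations under binary ($\{0,1\}$) valuations are EF1 (see, e.g., \citep{freeman2019arxiv}), applied to $A'$ in $\mathcal{I}^G$. To transfer this EF1 property to $\mathcal{I}$, I use two observations: (i) Nash PO in $\mathcal{I}^G$ forces each item in $A'_i$ to be valued at $1$ by $i$, so $v_i^{\mathcal{I}}(A'_i) = v_i^{\mathcal{I}^G}(A'_i)$; (ii) transforming $\mathcal{I}$ into $\mathcal{I}^G$ only raises some $-1$'s to $0$'s, giving $v_i^{\mathcal{I}}(A'_j) \leq v_i^{\mathcal{I}^G}(A'_j)$. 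The good $g$ witnessing EF1 in $\mathcal{I}^G$ satisfies $v_i^{\mathcal{I}^G}(g) = 1 = v_i^{\mathcal{I}}(g)$, so $g$ remains a good for $i$ in $\mathcal{I}$ and the EF1 inequality carries over.

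For the inductive step, Step-4 additions (a chore $o$ with $v_i(o) = 0$ placed in $A_i$) leave $v_i(A_i)$ unchanged and do not increase any other $j$'s valuation of $A_i$ (since $v_j(o) \leq 0$), so EF1 survives. For a Step-5 iteration $t$, I decompose $A^{(t)}_i = A'_i \cup S_i \cup C^{(t)}_i$, where $S_i$ are step-4 chores (valued $0$ by $i$) and $C^{(t)}_i$ are earlier step-5 chores (valued $-1$ by all), yielding $v_i(A^{(t)}_i) = |A'_i| - |C^{(t)}_i|$. Richness of $r$ becomes $|A'_r| - |C^{(t)}_r| \geq |A'_j| - |C^{(t)}_j|$. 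The chain
\[
v_r(A^{(t)}_j) \leq v_r(A'_j) - |C^{(t)}_j| \leq |A'_j| - |C^{(t)}_j| \leq |A'_r| - |C^{(t)}_r| = v_r(A^{(t)}_r)
\]
then shows $r$ does not envy any $j$ at iteration $t$; adding the chore $o$ to $A_r$ therefore satisfies EF1 via the removal of $o$. For any $i \neq r$, $v_i(A_r \cup \{o\}) = v_i(A_r) - 1$ only weakens any prior envy toward $r$.

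The main hurdle is this envy-freeness of the rich agent at each Step-5 iteration: EQ1-style richness orders own-bundle utilities, whereas EF1 requires cross-bundle comparisons. The bridge is the identity $v_i(A^{(t)}_i) = |A'_i| - |C^{(t)}_i|$, which relies on Nash PO (items in $A'_i$ all valued $1$ by $i$) and on Step 4's restriction to $0$-valued chores. Combining the base case with the two inductive steps yields EF1 throughout, so the output of \Cref{alg:EQ1_PO_n} is simultaneously EF1, EQ1, and PO, establishing the corollary.
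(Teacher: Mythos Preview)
Your proposal is correct and follows the same route as the paper, which merely asserts that the allocation from \Cref{alg:EQ1_PO_n} ``can be easily verified to be EF1''; you supply exactly that verification. The only minor slack is the claim that every item in $A'_i$ is valued $1$ by $i$ (this could fail for items valued $0$ by everyone), but such items contribute $0$ to every $v_r(A'_j)$ as well, so your chain $v_r(A^{(t)}_j)\le v_r(A'_j)-|C^{(t)}_j|\le |A'_j|_1-|C^{(t)}_j|\le |A'_r|_1-|C^{(t)}_r|=v_r(A^{(t)}_r)$ still goes through.
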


\end{document}